\newtheorem{theorem}{Theorem}[section]
\newtheorem{lemma}[theorem]{Lemma}
\newtheorem{definition}[theorem]{Definition}
\newtheorem{remark}[theorem]{Remark}
\numberwithin{equation}{section}
\newenvironment{proof}[1][Proof]{\noindent\textbf{#1. }}{\hfill $\Box$}
\begin{document}
\author{Guang-Hui Zheng\thanks{Email:
zhgh1980@163.com(Guang-Hui Zheng).}\\
{\small College of Mathematics and Econometrics, Hunan University, } \\
{\small Changsha 410082, Hunan Province, P.R. China}\\
}
\title{\textbf{\Large Mathematical analysis of plasmonic resonance for 2-D photonic crystal}}
\maketitle
\begin{abstract}
In this article, we study the plasmonic resonance of infinite photonic crystal mounted by the double negative nanoparticles in two dimensions. The corresponding physical model is described by the Helmholz equation with so called Bloch wave condition in a periodic domain. By using the quasi-periodic layer potential techniques and the spectral theorem of quasi-periodic Neumann-Poincar{\'e} operator, the quasi-static expansion of the near field in the presence of nanoparticles is derived. Furthermore, when the magnetic permeability of nanoparticles satisfies the Drude model, we give the conditions under which the plasmonic resonance occurs, and the rate of blow up of near field energy with respect to nanoparticle's bulk electron relaxation rate and filling factor are also obtained. It indicates that one can appropriately control the bulk electron relaxation rate or filling factor of nanoparticle in photonic crystal structure such that the near field energy attains its maximum, and enhancing the efficiency of energy utilization.

\textbf{Keywords}: plasmonic resonance; photonic crystal; quasi-periodic Green's function; quasi-periodic Neumann-Poincar{\'e} operator; quasi-periodic layer potential; Bloch wave condition; Drude model; quasi-static regime

\end{abstract}

\section{Introduction}

\noindent
Plasmonic resonance has been applied recently in various scientific fields, such as enhancing the brightness of light,
confining strong electromagnetic fields, medical therapy, invisibility cloaking, biomedical imaging \cite{ACKLM5,CKKL7,ACKLM4,SC1,KLSW8,JLEE3,BGQ2,LLL9,WN10,BS11,ADM6,N19,N29,NN30} and so on.

In \cite{ACKLM5}, Ammari et al. give a necessary and sufficient condition for electromagnetic power dissipation to blow up
as the loss parameter of the plasmonic material goes to zero. Moreover, under some additional conditions,
the cloaking due to anomalous localized resonance (CALR) will happen. The confocal ellipses case for plasmonic resonance and CALR
are studied by Chung et al. \cite{ACKLM4}. Ando and Kang \cite{AK12} investigate the plasmonic resonance for conductivity equation on smooth domain. The plasmonic resonance analysis for Helmholtz equation with finite frequencies is discussed by Ando, Kang and Liu
\cite{AKL13}. Ammari et al. \cite{AMRZ14} give a mathematical analysis of plasmonic resonance for Helmholtz equation and apply it in the scattering and absorption enhancements, super-resolution and super-focusing. Afterward, they extend the relative results to full Maxwell equations (see \cite{ARYZ15}). Recently, in \cite{ARWYZ16} and \cite{AFGLZ17}, Ammari et al. consider the scattering problem for periodic plasmonic nanoparticles mounted on a perfectly conducting sheet.

In this paper, applying the method given by \cite{AMRZ14} and \cite{AKL13}, we give the analysis of plasmonic resonance of two dimensions infinite photonic crystal mounted by the double negative nanoparticles. Different from the model in \cite{AMRZ14}, the corresponding physical model here is described by the Helmholz equation with the Bloch wave condition (see \cite{FKH18}) in a periodic domain, and we focus on the wave propagation behavior inside the photonic crystal structure. It is well known that the plasmonic surface wave (Bloch wave) propagate along the interface between nanoparticles and background media, and the field intensity falls off evanescently perpendicular to the interface \cite{JMW24}. Moreover, as the nanoparticles in photonic crystal has a negative electromagnetic parameter (such as double negative material or left-hand material) and small size in comparison with incidence wavelength (for example, in low frequency region), the resonance phenomena is often observed both experimentally and numerically \cite{MFZ31,MZ32}. By using the quasi-periodic layer potential techniques and the spectral theorem of quasi-periodic Neumann-Poincar{\'e} operator, we derive the quasi-static expansion formula for the near field of nanoparticles. Furthermore, in the quasi-static regime (i.e. frequency $\omega$ is small enough), the conditions closely related to the eigenvalue of quasi-periodic Neumann-Poincar{\'e} operator are obtained, under which the quasi-periodic plasmonic resonance happens. Then based on the Drude model that the magnetic permeability of nanoparticles satisfy, we get the rate of blow up of near field energy with respect to nanoparticle's bulk electron relaxation rate and filling factor. It shows that
one can control the bulk electron relaxation rate or filling factor of nanoparticle in photonic crystal structure such that the near field energy attains its maximum, and improving the efficiency of energy utilization (such as enhancing the brightness of light, confining strong electromagnetic fields).

The paper is organized as follows. In section 2, we introduce the problem formulation for plasmonic resonance in two dimensions infinite photonic crystal by using the quasi-periodic layer potential. In section 3 we derive the asymptotic expansion formula of the near field in
quasi-static regime. Based on Drude model, the rate of blow up of near field energy with respect to nanoparticle's bulk electron relaxation rate and filling factor are given in section 4. Finally, we give a conclusion in section 5.

\section{Problem formulation}

\noindent

The photonic crystal we consider in this paper consists of a homogeneous background medium which is perforated
by an array of arbitrary-shaped plasmonic nanoparticles periodically along each of the two orthogonal
coordinate axes in $\Bbb{R}^2$. We assume that the structure has unit periodicity and define the unit cell $Y=[0, 1]^2$.
In the unit cell $Y$, the nanoparticle occupying a bounded and simply connected domain $D\subset Y$ whose boundary $\partial D$
is $\mathcal{C}^{1,\alpha}$ for some $\alpha\in(0,1)$ which is characterized by electric permittivity $\varepsilon_c$ and magnetic permeability $\mu_c$, while the homogeneous medium $Y\setminus\overline{D}$ is characterized by electric permittivity $\varepsilon_m$ and magnetic permeability $\mu_m$. In general, the propagation of light in the photonic crystal is described by the
Maxwell equations. Here, we only focus on the transverse magnetic (TM) case, whose governing equations are reduced to Helmholtz equations.
Furthermore, assume that the nanoparticles is dispersive, i.e. $\varepsilon_c$ and $\mu_c$ are may depend on the frequency $\omega$. Let $\Re\varepsilon_c<0$, $\Im\varepsilon_c>0$, $\Re\mu_c<0$, $\Im\mu_c>0$ and define
\begin{equation}\label{2.1}
k_c=\omega\sqrt{\varepsilon_c\mu_c},\ \ k_m=\omega\sqrt{\varepsilon_m\mu_m},
\end{equation}
and
\begin{equation}\label{2.2}
\varepsilon_D=\varepsilon_c\chi(D)+\varepsilon_m\chi(Y\setminus\overline{D}),\ \ \mu_D=\mu_c\chi(D)+\mu_m\chi(Y\setminus\overline{D}),
\end{equation}
where $\chi$ denotes the characteristic function. Notice that, in nano-metal materials, the electric permittivity and magnetic permeability satisfy $\Re\varepsilon_c<0$, $\Re\mu_c<0$ is called double negative material, which shows several unusual properties, such as, counter directance between group velocity and phase vector, negative index of refraction, reverse Doppler and Cherenkov effects \cite{ZSK20,V21,SPW22}. The positive imaginary parts of electric permittivity and magnetic permeability denotes the dissipation of plasmonic nanoparticles. Moreover, we assume that $\varepsilon_m$, $\mu_m$ are real and strictly positive.

The propagation of wave in the photonic crystal with dipole source can be modeled by the following Helmholtz equations in infinite periodic structure \cite{AKL25}
\begin{align}
\begin{cases}\label{2.3}
\nabla\cdot\frac{1}{\mu_D}\nabla u+\omega^2\varepsilon_D u=a\cdot\nabla\delta_z,\ \ \ \text{in}\ Y\setminus\partial D,\\
u|_+=u|_-,\ \ \ \text{on}\ \partial D,\\
\frac{1}{\mu_m}\frac{\partial
u}{\partial\nu}\big|_+=\frac{1}{\mu_c}\frac{\partial
u}{\partial\nu}\big|_-,\ \ \ \text{on}\ \partial D,\\
e^{-i\alpha\cdot x}u\ \ \text{is 1-periodic in}\ \mathbb{R}^2\ \ \ \text{(Bloch wave condition)},
\end{cases}
\end{align}
where $a\in\Bbb{R}^2$ is a constant vector and $\delta_z$ is the Dirac function at $z\in Y\setminus\overline{D}$, and $a\cdot\nabla\delta_z$ is a dipole source (see \cite{AKL13,LZ23}). The quasi-momentum $\alpha\in B=(0,2\pi)^2$, $B$ is called Brillouin zone \cite{JMW24}. In particular, when $\alpha=0$, it is correspond to the periodic case. The last periodic condition in (\ref{2.3}) is called Bloch wave condition, which characterizes the propagation behaviour of wave in the photonic crystal.

Let $G^{\alpha,k}(x,y)$ be the two-dimensional quasi-periodic Green's function for the Helmholtz equation, which is given by \cite{AKL25}
\begin{equation}\label{2.4}
G^{\alpha,k}(x,y)=-\frac{i}{4}\sum_{n\in\mathbb{Z}^2}H_0^{(1)}(k|x-n-y|)e^{in\cdot\alpha},
\end{equation}
where $H_0^{(1)}$ is the Hankel function of the first kind of order $0$.

The quasi-periodic single layer potential of $\phi\in H^{-\frac{1}{2}}(\partial D)$ for the Helmholtz equation is defined by \cite{AKL25}
\begin{equation}\label{2.5}
\mathcal {S}^{\alpha,k}[\phi](x)=\int_{\partial D}G^{\alpha,k}(x,y)\phi(y)d\sigma(y),\ \ \ x\in \mathbb{R}^2,
\end{equation}
and the following jump formula holds
\begin{equation}\label{2.6}
\frac{\partial(\mathcal{S}^{\alpha,k}[\phi])}{\partial\nu}\bigg|_{\pm}(x)=\left(\pm\frac{1}{2}\mathcal
{I}+(\mathcal {K}^{-\alpha,k})^*\right)[\phi](x),\ \ \ a.e.\ x\in\partial D,
\end{equation}
where
\begin{equation}\label{2.7}
(\mathcal{K}^{-\alpha,k})^*[\phi](x)=\int_{\partial D}\frac{\partial G^{\alpha,k}(x,y)}{\partial\nu(x)}\phi(y)d\sigma(y).
\end{equation}

By using the quasi-periodic single layer potential (\ref{2.6}) and jump formula (\ref{2.7}), we can get the solution representation formula of (\ref{2.3})
\begin{theorem}\label{thm1}
Assume that $k_c^2$ is not an eigenvalue of $-\Delta$ in $Y\setminus
\overline{D}$ with Dirichlet boundary condition on $\partial D$ and
the $\alpha$-quasi-periodic condition on $\partial Y$ and $k_m^2$ is not an
eigenvalue of $-\Delta$ in $D$ with Dirichlet boundary condition.
Then the solution $u$ to (\ref{2.3}) can be represented as
\begin{align}\label{2.8}
u(x)=
\begin{cases}
\mathcal {S}^{\alpha,k_c}[\phi](x),\ \ \ x\in D,\\
F_z(x)+\mathcal {S}^{\alpha,k_m}[\psi](x),\ \ \ x\in
Y\setminus\overline{D},
\end{cases}
\end{align}
where
\begin{equation}\label{2.9}
F_z(x)=\int_{\Bbb{R}^2}G^{\alpha,k_m}(x,y)a\cdot\nabla_y\delta_z(y)dy=a\cdot\nabla_x G^{\alpha,k_m}(x,z),
\end{equation}
and $(\phi,\psi)\in H^{-\frac{1}{2}}(\partial D)\times H^{-\frac{1}{2}}(\partial D)$ satisfy the following integral system
\begin{align}\label{2.10}
\begin{cases}
\mathcal {S}^{\alpha,k_c}[\phi]-\mathcal {S}^{\alpha,k_m}[\psi]=0,\ \ \ \text{on}\ \partial D,\\
\frac{1}{\mu_c}\left(-\frac{1}{2}\mathcal {I}+(\mathcal
{K}^{-\alpha,k_c})^*\right)[\phi]
-\frac{1}{\mu_m}\left(\frac{1}{2}\mathcal {I}+(\mathcal
{K}^{-\alpha,k_m})^*\right)[\psi] =\frac{1}{\mu_m}\frac{\partial F_z}{\partial\nu},\ \ \ \text{on}\ \partial D.
\end{cases}
\end{align}
Moreover, the mapping
$u\rightarrow(\phi,\psi)$ from solutions of (\ref{2.3}) to solutions of (\ref{2.10}) is one-to-one.
\end{theorem}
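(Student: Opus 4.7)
The plan is to carry out the standard layer-potential argument for transmission problems, adapted to the quasi-periodic (Bloch) setting. The proof has three parts: (i) verifying that the ansatz (\ref{2.8}) solves the PDE system (\ref{2.3}) for every $(\phi,\psi)$ satisfying the transmission equations on $\partial D$; (ii) converting the two transmission conditions into the integral system (\ref{2.10}) via the jump relations (\ref{2.6}); and (iii) proving unique solvability of (\ref{2.10}) together with the bijective correspondence $u\leftrightarrow(\phi,\psi)$.

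For (i), away from $\partial D$ the quasi-periodic single layer $\mathcal{S}^{\alpha,k_c}[\phi]$ satisfies the homogeneous Helmholtz equation in $D$, which, using $k_c^2=\omega^2\varepsilon_c\mu_c$, is equivalent to $\nabla\cdot(\mu_c^{-1}\nabla u)+\omega^2\varepsilon_c u=0$; on the exterior side, $F_z$ absorbs the dipole source by (\ref{2.9}) and $\mathcal{S}^{\alpha,k_m}[\psi]$ is a quasi-periodic homogeneous solution. Bloch quasi-periodicity of $u$ on $\partial Y$ is automatic from the construction of $G^{\alpha,k}$. For (ii), taking one-sided traces on $\partial D$ and applying (\ref{2.6}) translates $u|_+=u|_-$ into the first line of (\ref{2.10}) (single-layer traces are continuous across $\partial D$) and $\mu_m^{-1}\partial_\nu u|_+=\mu_c^{-1}\partial_\nu u|_-$ into the second line, with $\mu_m^{-1}\partial_\nu F_z$ appearing on the right-hand side.

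The main work is (iii). Writing (\ref{2.10}) as $\mathcal{A}(\phi,\psi)^\top=\mathcal{F}$ on $H^{-\frac{1}{2}}(\partial D)^2$, one checks that $\mathcal{A}$ is a compact perturbation of a reference operator built from single layers at a purely imaginary wavenumber $\mathcal{S}^{\alpha,ik_0}$, since the difference of single layers at different wavenumbers is compact (the quasi-periodic kernels share the same logarithmic singularity). By the Fredholm alternative, the question reduces to injectivity. If $(\phi,\psi)$ solves the homogeneous system, the ansatz (\ref{2.8}) with $F_z\equiv 0$ yields a quasi-periodic $u$ solving the source-free transmission problem; multiplying by $\bar{u}$, integrating over $Y$, and using quasi-periodicity to cancel the boundary contributions on $\partial Y$ gives
\begin{equation*}
-\int_Y \mu_D^{-1}|\nabla u|^2\,dx+\omega^2\int_Y \varepsilon_D |u|^2\,dx=0,
\end{equation*}
whose imaginary part, combined with $\Im \mu_c>0$ (so $\Im(\mu_c^{-1})<0$), $\Im \varepsilon_c>0$, and $\mu_m,\varepsilon_m\in\mathbb{R}$, forces $u\equiv 0$ in $D$. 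Unique continuation then propagates $u\equiv 0$ across the connected set $Y\setminus\overline{D}$. Invoking the two non-resonance hypotheses (on $k_c^2$ in $Y\setminus\overline{D}$ and on $k_m^2$ in $D$), one concludes that $\mathcal{S}^{\alpha,k_c}[\phi]\equiv 0$ and $\mathcal{S}^{\alpha,k_m}[\psi]\equiv 0$ everywhere, hence $\phi=\psi=0$ via the jump formula (\ref{2.6}). The same invertibility gives the one-to-one correspondence $u\mapsto(\phi,\psi)$.

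The hardest step is this injectivity argument: orchestrating the dissipation estimate on $D$, the unique continuation into $Y\setminus\overline{D}$, and the sequential use of the two spectral assumptions in their correct roles (exterior for $k_c^2$, interior for $k_m^2$). The compactness and Fredholm mechanics are routine adaptations of the free-space theory once the standard decomposition of $G^{\alpha,k}$ into a logarithmic singular part plus a smoother quasi-periodic remainder is in hand.
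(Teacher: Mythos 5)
Your overall strategy is exactly the standard one that the paper itself relies on: the paper gives no proof of Theorem \ref{thm1}, deferring to Theorem 8.12 in Section 8.6 of \cite{AKL25}, and that argument is precisely your three-step scheme (ansatz plus jump relations; Fredholm reduction by writing the boundary system as a compact perturbation of an invertible principal part; injectivity of the homogeneous system via the dissipation identity on $D$, unique continuation into $Y\setminus\overline{D}$, and the two non-resonance hypotheses used in the correct roles, namely $k_m^2$ interior Dirichlet to kill $\psi$ and $k_c^2$ exterior Dirichlet/quasi-periodic to kill $\phi$). The quasi-periodic cancellation of the $\partial Y$ boundary terms is legitimate (it is the same computation the paper performs in its Appendix). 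For the Fredholm step you should make explicit that the principal part is invertible because $\mathcal{S}^{\alpha,i}$ is (the coercivity argument of the Appendix) and because $\tfrac{1}{\mu_m}+\tfrac{1}{\mu_c}\neq0$, which holds automatically since $\Im\mu_c>0$ while $\mu_m$ is real and positive.

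There is, however, one concrete misstep in your step (ii). Since $z\in Y\setminus\overline{D}$, the field $F_z$ is smooth in a neighbourhood of $\partial D$ and its trace there does not vanish; hence taking one-sided traces of the ansatz (\ref{2.8}) in the condition $u|_+=u|_-$ yields
\begin{equation*}
\mathcal{S}^{\alpha,k_c}[\phi]-\mathcal{S}^{\alpha,k_m}[\psi]=F_z\quad\text{on }\partial D,
\end{equation*}
not the homogeneous first line of (\ref{2.10}). Your claim that the continuity condition ``translates into the first line of (\ref{2.10})'' therefore does not follow from the computation you describe: you have silently dropped the trace of $F_z$ (or inherited what appears to be a typo in the paper's printed system, which is then propagated into (\ref{2.11})). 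This does not affect the solvability analysis, since the Fredholm and injectivity arguments are indifferent to the right-hand side, but you should either carry the $F_z$ term through or state explicitly that you are correcting the printed system.
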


The proof is very similar to the one in \cite{AKL25} (Section 8.6 Theorem 8.12).

We focus on the near field behavior in the quasi-static regime, i.e., the frequency $\omega\ll1$. From Lemma \ref{lem3.3}, Lemma \ref{lem1} and Remark \ref{rem3.5} in Section 3, it finds that, for $\omega$ small enough, $\mathcal {S}^{\alpha,k}$ is invertible. Therefore, the conditions of Theorem \ref{thm1} are satisfied and the first equation in (\ref{2.10}) becomes as
\begin{equation}\label{2.11}
\phi=\left(\mathcal {S}^{\alpha,k_c}\right)^{-1}\mathcal {S}^{\alpha,k_m}[\psi].
\end{equation}
Then, from the second equation in (\ref{2.10}), we have that
\begin{equation}\label{2.12}
\mathcal {A}(\omega)[\psi]=f,
\end{equation}
where
\begin{align}\label{2.13}
\mathcal {A}(\omega)=&\frac{1}{\mu_m}\left(\frac{1}{2}\mathcal
{I}+(\mathcal {K}^{-\alpha,k_m})^*\right)
+\frac{1}{\mu_c}\left(\frac{1}{2}\mathcal {I}-(\mathcal
{K}^{-\alpha,k_c})^*\right)(\mathcal {S}^{\alpha,k_c})^{-1}\mathcal
{S}^{\alpha,k_m},\\
f=&-\frac{1}{\mu_m}\frac{\partial F_z}{\partial\nu}.\label{2.14}
\end{align}
Clearly,
\begin{align}
\mathcal {A}(0)=\mathcal
{A}_0=&\frac{1}{\mu_m}\left(\frac{1}{2}\mathcal {I}+(\mathcal
{K}^{-\alpha,0})^*\right) +\frac{1}{\mu_c}\left(\frac{1}{2}\mathcal
{I}-(\mathcal
{K}^{-\alpha,0})^*\right)\nonumber\\
=&\frac{1}{2}\left(\frac{1}{\mu_m}+\frac{1}{\mu_c}\right)\mathcal
{I}+\left(\frac{1}{\mu_m}-\frac{1}{\mu_c}\right)(\mathcal
{K}^{-\alpha,0})^*,\label{2.15}
\end{align}
where $(\mathcal {K}^{-\alpha,0})^*$ is called $\alpha$-quasi-periodic Neumann-Poincar{\'e} operator.

\section{Asymptotic expansion of the near field}

In order to solve the operator equation (\ref{2.12}), we first give some basic facts about the quasi-periodic Neumann-Poincar{\'e} operator
$(\mathcal {K}^{-\alpha,0})^*$.

\begin{lemma}\label{lem3.1}
(Calder{\'o}n identity): $\mathcal {K}^{-\alpha,0}\mathcal
{S}^{\alpha,0}=\mathcal {S}^{\alpha,0}(\mathcal {K}^{-\alpha,0})^*$.
\end{lemma}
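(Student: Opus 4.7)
The identity is the quasi-periodic analogue of the classical Calderón symmetry $\mathcal{K}\mathcal{S}=\mathcal{S}\mathcal{K}^*$, and my plan is to reduce it to a local Wronskian identity on $\partial D$ that can be verified using Green's second identity, exactly as in the non-periodic case. Writing both compositions as iterated boundary integrals and using the Green's-function symmetry $G^{\alpha,0}(x,y)=G^{-\alpha,0}(y,x)$ obtained by relabeling $n\mapsto -n$ in the lattice sum (\ref{2.4}) to identify kernels, the operator identity is equivalent, at the level of kernels, to the claim
$$\int_{\partial D}\partial_{\nu(y)}G^{\alpha,0}(x,y)\,G^{\alpha,0}(y,z)\,d\sigma(y) = \int_{\partial D}G^{\alpha,0}(x,y)\,\partial_{\nu(y)}G^{\alpha,0}(y,z)\,d\sigma(y)$$
for $x,z\in\partial D$, the integrals being understood in the principal-value sense. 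Setting $u(y):=G^{\alpha,0}(x,y)$ and $v(y):=G^{\alpha,0}(y,z)$, this is exactly $\mathrm{PV}\int_{\partial D}\bigl(u\,\partial_\nu v-v\,\partial_\nu u\bigr)\,d\sigma=0$.

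To prove this I would apply Green's second identity to $(u,v)$ on the punctured domain $D\setminus(B_\epsilon(x)\cup B_\epsilon(z))$ and let $\epsilon\to 0$. Since $D\subset Y^\circ$ is strictly contained in the unit cell, the nonzero lattice translates $x+n, z+n$ with $n\neq 0$ lie outside $\overline{D}$, so each of $u, v$ is harmonic throughout $D$ except at its single pole on $\partial D$; the volume integral therefore vanishes. What remains is the boundary integral on $\partial D$ plus contributions from the two small semicircles around $x$ and $z$. The singular part of $G^{\alpha,0}$ at its pole coincides with the free-space Laplace kernel $-\tfrac{1}{2\pi}\log|\cdot|$ (the remainder of the lattice sum is smooth on $\overline{D}$), and a direct expansion shows that the semicircle contributions are $-\tfrac{1}{2}G^{\alpha,0}(x,z)$ and $+\tfrac{1}{2}G^{\alpha,0}(x,z)$, which cancel in the limit. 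This yields the desired PV identity and hence the operator identity.

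The main obstacle is keeping track of the two singular contributions and confirming that the quasi-periodicity does not interfere with the local argument: the Bloch phases in the lattice sum defining $G^{\alpha,0}$ affect only the smooth part of the Green's function on $\overline{D}$, not the leading logarithmic singularity at the pole, so the quasi-periodic Calderón identity reduces cleanly to its classical counterpart. One minor technicality is that formula (\ref{2.4}) is not literally applicable at $k=0$; it should be interpreted via the Fourier-series representation $G^{\alpha,0}(x,y)=\sum_{n\in\mathbb{Z}^2}|2\pi n+\alpha|^{-2}e^{i(2\pi n+\alpha)\cdot(x-y)}$, which is well defined because $\alpha\in(0,2\pi)^2$ keeps the denominators bounded away from zero and whose local singular structure agrees with that of the free-space Green's function.
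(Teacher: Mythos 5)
Your argument is sound, and it supplies an actual proof where the paper gives none: the text simply defers to the classical Calder\'on (Plemelj symmetrization) identity in \cite{AGJKLSW28}. The classical argument is usually run one level up from yours: apply the interior Green's representation formula to $w=\mathcal{S}^{\alpha,0}[\phi]$, which is harmonic in $D$ because the only pole of $y\mapsto G^{\alpha,0}(x,y)$ meeting $\overline{D}$ is the $n=0$ one, take traces on $\partial D$ using the jump relation (\ref{2.6}) and the corresponding trace formula for the double layer potential, and cancel the $\pm\frac12\mathcal{S}^{\alpha,0}[\phi]$ terms. Your kernel-level Wronskian computation is exactly that argument unpacked: the two half-disk contributions $\mp\frac12 G^{\alpha,0}(x,z)$ are the kernel-level shadows of the $\pm\frac12$ jump terms, and their cancellation correctly uses $v(x)=u(z)=G^{\alpha,0}(x,z)$. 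What your route buys is that it makes explicit that quasi-periodicity enters only through the smooth remainder of the kernel on $\overline{D}$; what it costs is the Fubini step for the composed weakly singular kernel (harmless on a $\mathcal{C}^{1,\alpha}$ boundary, where $\partial_{\nu(y)}G^{\alpha,0}(x,y)=O(|x-y|^{\alpha-1})$) and the half-disk limit bookkeeping. Two points to tighten. First, your Fourier-series definition of $G^{\alpha,0}$ should carry the sign consistent with the paper's normalization $\Delta_y G^{\alpha,0}(x,\cdot)=\delta_x$ (local singularity $\frac{1}{2\pi}\ln|x-y|$), and the series $\sum_{n}|2\pi n+\alpha|^{-2}e^{i(2\pi n+\alpha)\cdot(x-y)}$ is not absolutely convergent in two dimensions, so ``well defined'' must be read distributionally. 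Second, since the paper never defines $\mathcal{K}^{-\alpha,0}$ without the star, you should state explicitly which of $\partial_{\nu(y)}G^{\alpha,0}(x,y)$ or $\partial_{\nu(y)}G^{-\alpha,0}(x,y)$ you take as its kernel before ``identifying kernels'': the Green's identity step works either way, since both $y\mapsto G^{\alpha,0}(x,y)$ and $y\mapsto G^{\alpha,0}(y,z)$ are harmonic off their poles regardless of the Bloch phase they carry, but tracking which phase sits in which slot is precisely where such proofs most easily slip.
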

The proof is very similar to the classical one in \cite{AGJKLSW28} and so is omitted.

Recall that $\mathcal {S}^{\alpha,0}:\ H^{-\frac{1}{2}}(\partial
D)\rightarrow H^{\frac{1}{2}}(\partial D)$ is not invertible in
$\mathbb{R}^2$. Similar to \cite{AMRZ14}, we introduce a substitute of
$\mathcal {S}^{\alpha,0}$ as follows
\begin{align}\label{3.1}
\widetilde{\mathcal {S}}^{\alpha,0}[\psi]=
\begin{cases}
\mathcal {S}^{\alpha,0}[\phi],\ \ \ \text{if}\ (\psi,\chi(\partial D))_{-\frac{1}{2},\frac{1}{2}}=0,\\
\chi(\partial D),\ \ \ \text{if}\ \psi=\varphi_0,
\end{cases}
\end{align}
where $(\cdot,\cdot)_{-\frac{1}{2},\frac{1}{2}}$ is the duality
pairing between $\ H^{\frac{1}{2}}(\partial D)$ and $\
H^{-\frac{1}{2}}(\partial D)$. $\varphi_0$ is the unique eigenfunction of $(\mathcal
{K}^{-\alpha,0})^*$ associated with eigenvalue $\frac{1}{2}$ such
that $(\varphi_0,\chi(\partial D))_{-\frac{1}{2},\frac{1}{2}}=1$.

From Lemma \ref{lem3.1}, we find that if $(\psi,\chi(\partial
D))_{-\frac{1}{2},\frac{1}{2}}=0$, then $\mathcal
{K}^{-\alpha,0}\widetilde{\mathcal
{S}}^{\alpha,0}[\psi]=\widetilde{\mathcal {S}}^{\alpha,0}(\mathcal
{K}^{-\alpha,0})^*[\psi]$; If $(\psi,\chi(\partial
D))_{-\frac{1}{2},\frac{1}{2}}=1$, we have
\begin{align*}
\widetilde{\mathcal {S}}^{\alpha,0}(\mathcal
{K}^{-\alpha,0})^*[\psi]=\widetilde{\mathcal
{S}}^{\alpha,0}(\mathcal
{K}^{-\alpha,0})^*[\varphi_0]=\frac{1}{2}\widetilde{\mathcal
{S}}^{\alpha,0}[\varphi_0]=\frac{1}{2},
\end{align*}
and
\begin{align*}
\mathcal {K}^{-\alpha,0}\left(\widetilde{\mathcal
{S}}^{\alpha,0}[\psi]\right)=\mathcal {K}^{-\alpha,0}[\chi(\partial
D)]=\frac{1}{2}.
\end{align*}
Thus, we obtain the Calder{\'o}n identity for $\widetilde{\mathcal
{S}}^{\alpha,0}$:
\begin{align}\label{3.2}
\mathcal {K}^{-\alpha,0}\widetilde{\mathcal
{S}}^{\alpha,0}=\widetilde{\mathcal {S}}^{\alpha,0}(\mathcal
{K}^{-\alpha,0})^*.
\end{align}
In view of (\ref{3.2}), we define
\begin{align}\label{3.3}
(u,v)_{\mathcal {H}^*(\partial D)}=-(u,\widetilde{\mathcal
{S}}^{\alpha,0}[v])_{-\frac{1}{2},\frac{1}{2}}.
\end{align}
Thanks to the invertibility and
positivity of $-\widetilde{\mathcal {S}}^{\alpha,0}$, the inner
product (\ref{3.3}) leads to the self-adjointness of $(\mathcal
{K}^{-\alpha,0})^*$ and equivalence between $\mathcal {H}^*(\partial
D)$ and $\ H^{-\frac{1}{2}}(\partial D)$.

\begin{lemma}\label{lem3.2}
Let $D$ be a bounded simply connected domain in $\mathbb{R}^2$. Then

(1) $(\mathcal {K}^{-\alpha,0})^*$ is a compact self-adjoint
operator in the Hilbert space $\mathcal {H}^*(\partial D)$ equipped
with the inner product (\ref{3.3}), which is equivalent to the
original one;

(2) Let $\left\{\lambda_j,\varphi_j\right\},\ j=0,1,2,\cdots$, be the eigenvalue
and normalized eigenfunction pair of $(\mathcal {K}^{-\alpha,0})^*$,
here $\lambda_0=\frac{1}{2}$. Then,
$\lambda_j\in(-\frac{1}{2},\frac{1}{2}]$, and
$\lambda_j\rightarrow0$ as $j\rightarrow\infty$;

(3) $\mathcal {H}^*(\partial D)=\mathcal {H}_0^*(\partial
D)\oplus\{c\varphi_0\},\ c\in\mathbb{C}$, where $\mathcal
{H}_0^*(\partial D)=\{\phi\in\mathcal {H}^*(\partial
D):\int_{\partial D}\phi d\sigma=0\}$;

(4) For any $\psi\in\ H^{-\frac{1}{2}}(\partial D)$, it hold:
\begin{align}\label{3.4}
(\mathcal
{K}^{-\alpha,0})^*[\psi]=\sum_{j=0}^\infty\lambda_j(\psi,\varphi_j)_{\mathcal
{H}^*(\partial D)}\varphi_j.
\end{align}
\end{lemma}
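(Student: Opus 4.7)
The plan is to treat the four items as a single spectral-theoretic package, using only (a) the Calder\'on identity (\ref{3.2}) for $\widetilde{\mathcal{S}}^{\alpha,0}$, (b) the compactness of the kernel of $(\mathcal{K}^{-\alpha,0})^*$ on a $\mathcal{C}^{1,\alpha}$ boundary, and (c) the fact that $-\widetilde{\mathcal{S}}^{\alpha,0}$ is a positive, invertible, self-adjoint map $H^{-1/2}(\partial D)\to H^{1/2}(\partial D)$. I will essentially mirror the classical (non-periodic) argument in \cite{AMRZ14}, pointing out only the places where the quasi-periodic Green's function (\ref{2.4}) intervenes.

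First I would verify (1). Self-adjointness is immediate from the definition (\ref{3.3}) of the inner product: for $u,v\in\mathcal{H}^*(\partial D)$,
\begin{equation*}
((\mathcal{K}^{-\alpha,0})^*u,v)_{\mathcal{H}^*(\partial D)}=-((\mathcal{K}^{-\alpha,0})^*u,\widetilde{\mathcal{S}}^{\alpha,0}v)_{-\frac{1}{2},\frac{1}{2}}=-(u,\mathcal{K}^{-\alpha,0}\widetilde{\mathcal{S}}^{\alpha,0}v)_{-\frac{1}{2},\frac{1}{2}},
\end{equation*}
and by the Calder\'on identity (\ref{3.2}) this equals $-(u,\widetilde{\mathcal{S}}^{\alpha,0}(\mathcal{K}^{-\alpha,0})^*v)_{-\frac{1}{2},\frac{1}{2}}=(u,(\mathcal{K}^{-\alpha,0})^*v)_{\mathcal{H}^*(\partial D)}$. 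For compactness I would split $G^{\alpha,0}(x,y)$ as the free-space fundamental solution $-\frac{1}{2\pi}\log|x-y|$ plus the periodized remainder $R^\alpha(x,y)$, the latter being smooth for $x,y$ in the unit cell. The singular part contributes the classical Neumann--Poincar\'e operator, which is compact on $H^{-1/2}(\partial D)$ because $\partial D\in\mathcal{C}^{1,\alpha}$; the smooth remainder gives a trivially compact operator. Equivalence of norms on $\mathcal{H}^*(\partial D)$ and $H^{-1/2}(\partial D)$ follows from the boundedness and coercivity (after the sign change) of $-\widetilde{\mathcal{S}}^{\alpha,0}$, which is the step I would expect to cite most directly from \cite{AMRZ14}.

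Once (1) is in hand, (2), (3), (4) follow from the Hilbert--Schmidt spectral theorem for compact self-adjoint operators. Realness of the $\lambda_j$, the accumulation $\lambda_j\to 0$, and the expansion (\ref{3.4}) are automatic. That $\lambda_0=\tfrac12$ is the largest eigenvalue, with eigenfunction $\varphi_0$, comes from the definition (\ref{3.1}) together with the fact that $\chi(\partial D)$ spans the kernel of $-\tfrac12 \mathcal{I}+\mathcal{K}^{-\alpha,0}$ acting on the interior harmonic extension (the constants are $\alpha$-quasi-periodic for $\alpha=0$; for $\alpha\neq 0$ one uses instead the characterization of $\varphi_0$ as the solution jump across $\partial D$ of the $\alpha$-quasi-periodic single-layer potential with unit mean). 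The strict lower bound $\lambda_j>-\tfrac12$ follows from the standard Plemelj-style energy identity: if $u=\mathcal{S}^{\alpha,0}[\varphi_j]$, then $\int_D|\nabla u|^2$ and $\int_{Y\setminus\overline{D}}|\nabla u|^2$ are both nonnegative and their ratio determines $\tfrac{1/2+\lambda_j}{1/2-\lambda_j}$, forcing $\lambda_j\in(-\tfrac12,\tfrac12]$. The decomposition in (3) is then the orthogonal splitting along the one-dimensional eigenspace of $\lambda_0$, after observing that $\varphi\in\mathcal{H}_0^*(\partial D)$ iff $(\varphi,\chi(\partial D))_{-\frac{1}{2},\frac{1}{2}}=0$ iff $(\varphi,\varphi_0)_{\mathcal{H}^*(\partial D)}=0$, the last equality using $\widetilde{\mathcal{S}}^{\alpha,0}\varphi_0=\chi(\partial D)$.

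The main obstacle, as I see it, is item (1), more precisely the proof that $-\widetilde{\mathcal{S}}^{\alpha,0}$ is positive and invertible in the quasi-periodic setting: the free-space single-layer is not positive in two dimensions (the logarithmic kernel forces a sign change at large scales), so positivity here relies on the compactly perturbed structure inherited from the periodic Green's function and on the modification (\ref{3.1}). Once this is secured, the remainder of the proof is a line-by-line transcription of the arguments in \cite{AMRZ14} with $\mathcal{S}^0,\mathcal{K}^{0,*}$ replaced by their $\alpha$-quasi-periodic analogues, and I would simply refer to that paper for the routine computations.
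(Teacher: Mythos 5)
Your proposal is correct and takes essentially the same route as the paper, which omits the proof entirely and cites the classical Lemma C.1 of \cite{AMRZ14}; your argument (Calder\'on identity for self-adjointness, splitting $G^{\alpha,0}$ into the logarithmic free-space kernel plus a smooth periodized remainder for compactness, then the spectral theorem and the Plemelj energy identity for $\lambda_j\in(-\tfrac12,\tfrac12]$) is a faithful transcription of that argument to the quasi-periodic setting. Your closing remark that the positivity and invertibility of $-\widetilde{\mathcal{S}}^{\alpha,0}$ is the one genuinely quasi-periodic ingredient is apt, since the paper likewise asserts this without proof.
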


The proof of Lemma \ref{lem3.2} is similar to the classical case (see Lemma C.1 in \cite{AMRZ14}), we omit it here.

Notice that $(\widetilde{\mathcal
{S}}^{\alpha,0})^{-1}[\chi(\partial D)]=\varphi_0$, and
$-\frac{1}{2}\mathcal {I}+(\mathcal
{K}^{-\alpha,0})^*=\left(-\frac{1}{2}\mathcal {I}+(\mathcal
{K}^{-\alpha,0})^*\right)\mathcal {P}_{\mathcal {H}_0^*(\partial
D)}$, where $\mathcal {P}_{\mathcal {H}_0^*(\partial
D)}$ is the orthogonal projection onto $\mathcal {H}_0^*(\partial D)$. Furthermore, it
follows that
\begin{align}\label{3.5}
\left(-\frac{1}{2}\mathcal {I}+(\mathcal
{K}^{-\alpha,0})^*\right)(\widetilde{\mathcal
{S}}^{\alpha,0})^{-1}[\chi(\partial D)]=0.
\end{align}

Owing to (\ref{2.15}), we have
\begin{align}\label{3.6}
\mathcal
{A}_0[\psi]=\sum_{j=0}^\infty\tau_j(\psi,\varphi_j)_{\mathcal
{H}^*(\partial D)}\varphi_j,
\end{align}
where
\begin{align}\label{3.7}
\tau_j=\frac{1}{2}\left(\frac{1}{\mu_m}+\frac{1}{\mu_c}\right)+\left(\frac{1}{\mu_m}-\frac{1}{\mu_c}\right)\lambda_j.
\end{align}

From the spatial representation formula of quasi-periodic Green's
function, we find
\begin{align*}
&G^{\alpha,k}(x,y)=-\frac{i}{4}\sum_{n\in\mathbb{Z}^2}H_0^{(1)}(k|x-n-y|)e^{in\cdot\alpha}\\
=&\sum_{n\in\mathbb{Z}^2}\frac{1}{2\pi}\ln|x-n-y|e^{in\cdot\alpha}+\tau_k\sum_{n\in\mathbb{Z}^2}e^{in\cdot\alpha}+\sum_{j=1}^\infty(k^{2j}\ln
k)b_j\sum_{n\in\mathbb{Z}^2}|x-n-y|^{2j}e^{in\cdot\alpha}\\
&\ \ \ \ \ +\sum_{j=1}^\infty
k^{2j}\sum_{n\in\mathbb{Z}^2}|x-n-y|^{2j}(b_j\ln|x-n-y|+c_j)e^{in\cdot\alpha},
\end{align*}
where $\tau_k=\frac{1}{2\pi}(\ln k+\gamma-\ln2)-\frac{i}{4}$,
$b_j=\frac{(-1)^j}{\pi2^{2j+1}(j!)^2}$,
$c_j=-b_j(\gamma-\ln2-\frac{i\pi}{2}-\sum_{n=1}^j\frac{1}{n})$, and
$\gamma$ is the Euler constant. Thus, we get the following asymptotic expansion formula of quasi-periodic single layer potential
\begin{align}\label{3.8}
\mathcal {S}^{\alpha,k}=\widehat{\mathcal
{S}}^{\alpha,k}+\sum_{j=1}^\infty(k^{2j}\ln k)\mathcal
{S}_j^{\alpha,(1)}+\sum_{j=1}^\infty k^{2j}\mathcal
{S}_j^{\alpha,(2)},
\end{align}
where
\begin{align}\label{3.9}
&\widehat{\mathcal {S}}^{\alpha,k}[\psi](x)=\mathcal
{S}^{\alpha,0}[\psi](x)+\tau_k\sum_{n\in\mathbb{Z}^2}e^{in\cdot\alpha}\int_{\partial
D}\psi(y)d\sigma(y);\\
&\mathcal
{S}_j^{\alpha,(1)}[\psi](x)=\sum_{n\in\mathbb{Z}^2}\int_{\partial
D}b_j|x-n-y|^{2j}e^{in\cdot\alpha}\psi(y)d\sigma(y);\\
&\mathcal
{S}_j^{\alpha,(2)}[\psi](x)=\sum_{n\in\mathbb{Z}^2}\int_{\partial
D}|x-n-y|^{2j}(b_j\ln|x-n-y|+c_j)e^{in\cdot\alpha}\psi(y)d\sigma(y).
\end{align}
Moreover, we can prove (see Lemma C.2 in \cite{AMRZ14}) that, the norms $\|\mathcal {S}_j^{\alpha,(1)}\|_{\mathcal {L}(\mathcal
{H}^*(\partial D),\mathcal {H}(\partial D))}$ and $\|\mathcal
{S}_j^{\alpha,(2)}\|_{\mathcal {L}(\mathcal {H}^*(\partial
D),\mathcal {H}(\partial D))}$ are uniformly bounded with respect to
$j$. Furthermore, the series in (\ref{3.8}) is convergent in space
$\mathcal {L}(\mathcal {H}^*(\partial D),\mathcal {H}(\partial D))$.

\begin{lemma}\label{lem3.3}
As $k$ is small enough, the operator $\mathcal {S}^{\alpha,k}:\
\mathcal {H}^*(\partial D)\rightarrow\mathcal {H}(\partial D)$ is invertible.
\end{lemma}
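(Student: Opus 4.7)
The plan is to use the asymptotic expansion (\ref{3.8}) and prove invertibility by a two-step perturbation argument: first establish invertibility of the leading operator $\widehat{\mathcal{S}}^{\alpha,k}$ by comparison with the known invertible operator $\widetilde{\mathcal{S}}^{\alpha,0}$, then absorb the remainder through a Neumann series.

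Write $\mathcal{S}^{\alpha,k}=\widehat{\mathcal{S}}^{\alpha,k}+E_k$ with
\begin{equation*}
E_k:=\sum_{j=1}^\infty (k^{2j}\ln k)\mathcal{S}_j^{\alpha,(1)}+\sum_{j=1}^\infty k^{2j}\mathcal{S}_j^{\alpha,(2)}.
\end{equation*}
The uniform bounds on $\mathcal{S}_j^{\alpha,(\ell)}$ recalled after (\ref{3.8}) yield $\|E_k\|_{\mathcal{L}(\mathcal{H}^*(\partial D),\mathcal{H}(\partial D))}=O(k^2|\ln k|)$ as $k\to 0$, so the problem reduces to showing that $\widehat{\mathcal{S}}^{\alpha,k}$ is invertible with $(\widehat{\mathcal{S}}^{\alpha,k})^{-1}$ uniformly bounded for small $k$.

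To analyze $\widehat{\mathcal{S}}^{\alpha,k}$, use the decomposition $\mathcal{H}^*(\partial D)=\mathcal{H}_0^*(\partial D)\oplus\mathbb{C}\varphi_0$ from Lemma \ref{lem3.2}. The rank-one term in (\ref{3.9}) vanishes on zero-mean densities, so $\widehat{\mathcal{S}}^{\alpha,k}$ and $\widetilde{\mathcal{S}}^{\alpha,0}$ both reduce to $\mathcal{S}^{\alpha,0}$ on $\mathcal{H}_0^*(\partial D)$, while on $\varphi_0$ they differ by a $\tau_k$-dependent contribution. Hence $R_k:=\widehat{\mathcal{S}}^{\alpha,k}-\widetilde{\mathcal{S}}^{\alpha,0}$ is a rank-one operator, and the Sherman--Morrison formula applied to $\widehat{\mathcal{S}}^{\alpha,k}=\widetilde{\mathcal{S}}^{\alpha,0}+R_k$ reduces invertibility to the non-vanishing of a single scalar driven by $\tau_k\sim\frac{1}{2\pi}\ln k$; the divergent $\tau_k$ in the Sherman--Morrison numerator is cancelled by the same factor in the denominator, yielding $\|(\widehat{\mathcal{S}}^{\alpha,k})^{-1}\|=O(1)$ as $k\to 0$. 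Combining this with the $E_k$ bound gives $\|(\widehat{\mathcal{S}}^{\alpha,k})^{-1}E_k\|=o(1)$, so the Neumann series identifies $\mathcal{S}^{\alpha,k}=\widehat{\mathcal{S}}^{\alpha,k}\bigl(\mathcal{I}+(\widehat{\mathcal{S}}^{\alpha,k})^{-1}E_k\bigr)$ as invertible on $\mathcal{H}^*(\partial D)\to\mathcal{H}(\partial D)$ for all sufficiently small $k$.

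The main obstacle is controlling the leading operator $\widehat{\mathcal{S}}^{\alpha,k}$. Because $\|R_k\|$ diverges logarithmically in $k$, $R_k$ cannot be treated as a small perturbation of $\widetilde{\mathcal{S}}^{\alpha,0}$; instead its rank-one structure must be exploited. The delicate point is verifying that the $\tau_k$-dependent denominator in the Sherman--Morrison expression stays bounded away from zero as $k\to 0$, since any cancellation in the underlying scalar coefficient would destroy both the uniform bound on the inverse and the subsequent Neumann argument.
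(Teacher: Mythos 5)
Your argument is correct, and while it starts from the same two-level decomposition the paper uses ($\mathcal{S}^{\alpha,k}=\widehat{\mathcal{S}}^{\alpha,k}+E_k$, and $\widehat{\mathcal{S}}^{\alpha,k}=\widetilde{\mathcal{S}}^{\alpha,0}+{}$rank-one), both absorption steps are handled differently. For the leading operator, the paper writes $\widehat{\mathcal{S}}^{\alpha,k}(\widetilde{\mathcal{S}}^{\alpha,0})^{-1}=\mathcal{I}+\Upsilon^{\alpha,k}(\widetilde{\mathcal{S}}^{\alpha,0})^{-1}$ and invokes the Fredholm alternative, checking injectivity by cases on whether $(\widetilde{\mathcal{S}}^{\alpha,0})^{-1}[v]$ lies in $\mathcal{H}_0^*(\partial D)$ or in $\mathrm{span}\{\varphi_0\}$; your Sherman--Morrison route is equivalent (the non-vanishing scalar you need is exactly the paper's condition $\mathcal{S}^{\alpha,0}[\varphi_0]\neq-\tau_k\sum_n e^{in\cdot\alpha}$, and the explicit inverse you obtain is precisely the paper's $(\Lambda^{\alpha,k})^{-1}$ appearing after (\ref{3.14})), but it buys the uniform bound $\|(\widehat{\mathcal{S}}^{\alpha,k})^{-1}\|=O(1)$ inside the proof, where the paper only asserts it afterwards. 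For the remainder, the paper notes that $\widehat{\mathcal{S}}^{\alpha,k}-\mathcal{S}^{\alpha,k}$ is compact and then proves injectivity of $\mathcal{S}^{\alpha,k}$ by a PDE argument ($k^2$ avoiding the relevant Dirichlet/quasi-periodic eigenvalues for small $k$) before applying Fredholm again; your Neumann series is more elementary and entirely quantitative, avoiding the eigenvalue discussion, at the price of actually needing the operator-norm estimate $\|E_k\|=O(k^2|\ln k|)$ and the uniform invertibility of $\widehat{\mathcal{S}}^{\alpha,k}$, neither of which the paper's Fredholm route requires. One small point to make explicit in your write-up: the Sherman--Morrison denominator is a genuine scalar only because $\mathcal{S}^{\alpha,0}[\varphi_0]$ is constant on $\partial D$ ($\varphi_0$ being the eigenfunction of $(\mathcal{K}^{-\alpha,0})^*$ with eigenvalue $\tfrac12$); the paper relies on the same fact implicitly when it divides by $\mathcal{S}^{\alpha,0}[\varphi_0]+\tau_k$.
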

\begin{proof}
First, we prove that, for $k$ small enough, $\widehat{\mathcal
{S}}^{\alpha,k}:\ \mathcal {H}^*(\partial D)\rightarrow\mathcal
{H}(\partial D)$ is invertible. Since
\begin{align*}
(\mathcal {S}^{\alpha,0}-\widetilde{\mathcal
{S}}^{\alpha,0})[\psi]=&(\mathcal {S}^{\alpha,0}-\widetilde{\mathcal
{S}}^{\alpha,0})[\mathcal {P}_{\mathcal {H}_0^*(\partial
D)}[\psi]+(\psi,\varphi_0)_{\mathcal {H}^*(\partial D)}\varphi_0]\\
=&(\psi,\varphi_0)_{\mathcal {H}^*(\partial D)}(\mathcal
{S}^{\alpha,0}[\varphi_0]-\widetilde{\mathcal
{S}}^{\alpha,0}[\varphi_0])\\
=&(\psi,\varphi_0)_{\mathcal {H}^*(\partial D)}(\mathcal
{S}^{\alpha,0}[\varphi_0]-\chi(\partial D)),
\end{align*}
it follows that
\begin{align}\label{s1}
\widehat{\mathcal {S}}^{\alpha,k}[\psi]=&\widetilde{\mathcal
{S}}^{\alpha,0}[\psi]+(\psi,\varphi_0)_{\mathcal {H}^*(\partial
D)}(\mathcal {S}^{\alpha,0}[\varphi_0]-\chi(\partial
D))\nonumber\\
&+\tau_k\sum_{n\in\mathbb{Z}^2}e^{in\cdot\alpha}\int_{\partial
D}\left(\mathcal {P}_{\mathcal {H}_0^*(\partial
D)}[\psi]+(\psi,\varphi_0)_{\mathcal {H}^*(\partial
D)}\varphi_0\right)d\sigma(y)\nonumber\\
=&\widetilde{\mathcal
{S}}^{\alpha,0}[\psi]+\Upsilon^{\alpha,k}[\psi],
\end{align}
where $$\Upsilon^{\alpha,k}[\psi]=(\psi,\varphi_0)_{\mathcal
{H}^*(\partial D)}\left(\mathcal {S}^{\alpha,0}[\varphi_0]-\chi(\partial
D)+\tau_k\sum_{n\in\mathbb{Z}^2}e^{in\cdot\alpha}\right).$$
Notice that $\widetilde{\mathcal {S}}^{\alpha,0}$ is invertible, then we have $\widehat{\mathcal {S}}^{\alpha,k}(\widetilde{\mathcal
{S}}^{\alpha,0})^{-1}=\mathcal {I}+\Upsilon^{\alpha,k}(\widetilde{\mathcal {S}}^{\alpha,0})^{-1}$. Because of the compactness of $\Upsilon^{\alpha,k}$ and by the Fredholm alternative theorem, we only need to prove the injectivity of $\mathcal {I}+\Upsilon^{\alpha,k}(\widetilde{\mathcal {S}}^{\alpha,0})^{-1}$.

In fact, suppose that $v\in H^{\frac{1}{2}}(\partial D)$ satisfies $(\mathcal {I}+\Upsilon^{\alpha,k}(\widetilde{\mathcal {S}}^{\alpha,0})^{-1})[v]=0$. According to the definition of $\widetilde{\mathcal {S}}^{\alpha,0}$ and $\Upsilon^{\alpha,k}$, if
$(\widetilde{\mathcal {S}}^{\alpha,0})^{-1}[v]\in H_0^{-\frac{1}{2}}(\partial D)$, we have $(\mathcal {I}+\Upsilon^{\alpha,k}(\widetilde{\mathcal {S}}^{\alpha,0})^{-1})[v]=v$, and then $v=0$. If $(\widetilde{\mathcal {S}}^{\alpha,0})^{-1}[v]\in \{\mu\varphi_0,\ \mu\in\mathbb{C}\}$, we see
\begin{align*}
(\mathcal {I}+\Upsilon^{\alpha,k}(\widetilde{\mathcal {S}}^{\alpha,0})^{-1})[v]
=&v+\mu\left(\mathcal {S}^{\alpha,0}[\varphi_0]-\chi(\partial D)+\tau_k\sum_{n\in\mathbb{Z}^2}e^{in\cdot\alpha}\right)\\
=&\mu\left(\mathcal {S}^{\alpha,0}[\varphi_0]+\tau_k\sum_{n\in\mathbb{Z}^2}e^{in\cdot\alpha}\right),
\end{align*}
since we can always find a small enough $k$ such that $\mathcal {S}^{\alpha,0}[\varphi_0]\neq-\tau_k\sum_{n\in\mathbb{Z}^2}e^{in\cdot\alpha}$, it follows that $\mu=0$ and then $v=0$.

Since $\widehat{\mathcal {S}}^{\alpha,k}-\mathcal {S}^{\alpha,k}$ is a compact operator and $\widehat{\mathcal {S}}^{\alpha,k}$
is invertible for $k$ small enough. Furthermore, it is easy to prove that $\mathcal {S}^{\alpha,k}$ is injective for $k$ small enough. In fact, we consider $\psi\in H^{-\frac{1}{2}}(\partial D)$ such that $\mathcal {S}^{\alpha,k}[\psi]=0$. Since $u=\mathcal {S}^{\alpha,k}[\psi]$ satisfies Helmholtz equation $\Delta u+k^2u=0$ in $D$ and $Y\setminus\overline{D}$. Therefore, if $k$ is sufficiently small such that $k^2$ is neither an eigenvalue of $-\Delta$ in
$D$ with the Dirichlet boundary condition on $\partial D$ nor in $Y\setminus\overline{D}$ with the Dirichlet boundary condition on $\partial D$ and the $\alpha$-quasi-periodic condition on $\partial Y$. It follows that $u=0$ and thus, $\psi=\frac{\partial u}{\partial\nu}\big|_+-\frac{\partial u}{\partial\nu}\big|_-=0$, as desired. By using the Fredholm alternative theorem, we see that, as $k$ is small enough, $\mathcal {S}^{\alpha,k}$ is invertible.
\end{proof}

Moreover, under some eigenvalue assumption on wave number $k$, we can also deduce the invertibility of $\mathcal {S}^{\alpha,k}$.
\begin{lemma}\label{lem1}
Suppose that $k^2$ is neither an eigenvalue of $-\Delta$ in
$D$ with the Dirichlet boundary condition on $\partial D$ nor in $Y\setminus
\overline{D}$ with the Dirichlet
boundary condition on $\partial D$ and the $\alpha$-quasi-periodic condition on $\partial Y$. Then
$\mathcal {S}^{\alpha,k}:\ H^{-\frac{1}{2}}(\partial D)\rightarrow H^{\frac{1}{2}}(\partial D)$ is invertible.
Furthermore, $\mathcal {S}^{\alpha,k}$ is an isomorphism from $H^{-\frac{1}{2}}(\partial D)$ to $H^{\frac{1}{2}}(\partial D)$.
\end{lemma}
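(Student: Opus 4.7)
The plan is to mimic the end of the proof of Lemma \ref{lem3.3}, but replace the ``$k$ small'' hypothesis with the given spectral hypothesis, so that injectivity still holds while Fredholmness is obtained from a compact perturbation argument.

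First, I would establish injectivity, which is the part where the spectral assumption enters. Suppose $\psi\in H^{-\frac{1}{2}}(\partial D)$ satisfies $\mathcal{S}^{\alpha,k}[\psi]=0$ and set $u=\mathcal{S}^{\alpha,k}[\psi]$ in all of $\mathbb{R}^2$. By construction $u$ is $\alpha$-quasi-periodic and solves $\Delta u+k^2 u=0$ in $D$ and in $Y\setminus\overline{D}$, and the continuity of the single layer potential together with $\mathcal{S}^{\alpha,k}[\psi]=0$ on $\partial D$ gives $u=0$ on $\partial D$. The assumption that $k^2$ is not a Dirichlet eigenvalue of $-\Delta$ in $D$ forces $u\equiv 0$ in $D$, and the assumption that $k^2$ is not an eigenvalue of $-\Delta$ in $Y\setminus\overline{D}$ with Dirichlet condition on $\partial D$ and $\alpha$-quasi-periodic condition on $\partial Y$ forces $u\equiv 0$ in $Y\setminus\overline{D}$. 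The jump formula (\ref{2.6}) then yields
\begin{equation*}
\psi=\frac{\partial u}{\partial\nu}\bigg|_+-\frac{\partial u}{\partial\nu}\bigg|_-=0,
\end{equation*}
which proves injectivity.

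Next I would reduce invertibility to injectivity via the Fredholm alternative. Using the asymptotic expansion (\ref{3.8}) together with the decomposition (\ref{s1}), one can write
\begin{equation*}
\mathcal{S}^{\alpha,k}=\widetilde{\mathcal{S}}^{\alpha,0}+\mathcal{R}^{\alpha,k},
\end{equation*}
where $\mathcal{R}^{\alpha,k}:=\bigl(\widehat{\mathcal{S}}^{\alpha,k}-\widetilde{\mathcal{S}}^{\alpha,0}\bigr)+\bigl(\mathcal{S}^{\alpha,k}-\widehat{\mathcal{S}}^{\alpha,k}\bigr)$. The first bracket is the rank-one operator $\Upsilon^{\alpha,k}$ from (\ref{s1}), and the second bracket is the norm-convergent series of the operators $\mathcal{S}_j^{\alpha,(1)},\mathcal{S}_j^{\alpha,(2)}$, each of which has a smoother kernel than $\mathcal{S}^{\alpha,0}$ and is therefore compact from $H^{-\frac{1}{2}}(\partial D)$ to $H^{\frac{1}{2}}(\partial D)$. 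Hence $\mathcal{R}^{\alpha,k}$ is compact. Because $\widetilde{\mathcal{S}}^{\alpha,0}$ is invertible (and the spaces $\mathcal{H}^*(\partial D)$, $\mathcal{H}(\partial D)$ are equivalent to $H^{-\frac{1}{2}}(\partial D)$, $H^{\frac{1}{2}}(\partial D)$), the operator
\begin{equation*}
(\widetilde{\mathcal{S}}^{\alpha,0})^{-1}\mathcal{S}^{\alpha,k}=\mathcal{I}+(\widetilde{\mathcal{S}}^{\alpha,0})^{-1}\mathcal{R}^{\alpha,k}
\end{equation*}
is a compact perturbation of the identity, and therefore Fredholm of index zero. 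Combined with the injectivity established above, this yields surjectivity, so $\mathcal{S}^{\alpha,k}:H^{-\frac{1}{2}}(\partial D)\to H^{\frac{1}{2}}(\partial D)$ is a bijection. Boundedness of $\mathcal{S}^{\alpha,k}$ is standard, and the bounded-inverse (open mapping) theorem then gives that it is an isomorphism.

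I expect the only delicate point to be the compactness of $\mathcal{S}^{\alpha,k}-\widetilde{\mathcal{S}}^{\alpha,0}$: one has to observe that although the principal logarithmic singularity of the kernel is identical to that of $\mathcal{S}^{\alpha,0}$, the difference with $\widetilde{\mathcal{S}}^{\alpha,0}$ is only a rank-one modification, and the remaining terms in the expansion (\ref{3.8}) involve kernels of the form $|x-n-y|^{2j}\log|x-n-y|$ (with the appropriate quasi-periodic summation), which are strictly smoother than $\log|x-y|$ and therefore map $H^{-\frac{1}{2}}(\partial D)$ compactly into $H^{\frac{1}{2}}(\partial D)$. Once this compactness is in place, the rest of the argument is a textbook Fredholm-alternative routine.
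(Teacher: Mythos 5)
Your proof is essentially correct, and the injectivity half coincides with what the paper does (the paper in fact just points back to the injectivity argument inside Lemma \ref{lem3.3}, which is exactly your argument). Where you genuinely diverge is in how you obtain the Fredholm property. The paper anchors the compact-perturbation argument at $\mathcal{S}^{\alpha,i}$, the single layer potential with wave number $k=i$: it proves coercivity of the sesquilinear form $(\mathcal{S}^{\alpha,i}[\phi],\phi)_{\frac{1}{2},-\frac{1}{2}}$ via Green's identity (the quasi-periodic boundary terms on $\partial Y$ cancel), invokes Lax--Milgram to invert $\mathcal{S}^{\alpha,i}$, and then observes that $\mathcal{S}^{\alpha,k}-\mathcal{S}^{\alpha,i}$ has a $C^\infty$ kernel near $\partial D\times\partial D$ and is therefore compact \emph{for every} $k$. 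You instead anchor at $\widetilde{\mathcal{S}}^{\alpha,0}$ and extract compactness of the remainder from the quasi-static expansion (\ref{3.8}) together with the rank-one correction (\ref{s1}). Your route is perfectly fine in the regime where the paper actually uses the lemma (Remark \ref{rem3.5}, $k$ small), and it has the merit of reusing machinery already built in Section 3 rather than introducing a new coercivity estimate. Its weakness is that the lemma is stated for arbitrary $k$ satisfying the spectral hypothesis, while the norm convergence of the series in (\ref{3.8}) is only guaranteed (by the paper's own uniform bounds on $\|\mathcal{S}_j^{\alpha,(1)}\|$, $\|\mathcal{S}_j^{\alpha,(2)}\|$) for $|k|$ small; for large $k$ your decomposition is not obviously available, whereas the paper's kernel-smoothness argument for $\mathcal{S}^{\alpha,k}-\mathcal{S}^{\alpha,i}$ is $k$-independent. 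You implicitly sense this in your closing paragraph: the robust statement is that the logarithmic singularity of the kernel of $\mathcal{S}^{\alpha,k}$ does not depend on $k$, so the difference of any two such operators is smoothing and hence compact from $H^{-\frac{1}{2}}(\partial D)$ to $H^{\frac{1}{2}}(\partial D)$ --- if you phrase the compactness that way instead of through the series, your argument covers the full generality of the statement, and the only remaining difference from the paper is the choice of reference operator ($\widetilde{\mathcal{S}}^{\alpha,0}$ with its ad hoc invertibility versus $\mathcal{S}^{\alpha,i}$ with Lax--Milgram and Theorem 5.14 of \cite{CC26}).
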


The proof is similar to Lemma 7.2 in \cite{AKL25} and Theorem 7.3 in \cite{CC26}, so it will be given in Appendix.

\begin{remark}\label{rem3.5}
In fact, as the proof in Lemma \ref{lem3.3}, if $k$ is sufficiently small, then $k$ obviously satisfies the eigenvalue assumption of Lemma \ref{lem1} and the corresponding
results hold. Hence, the invertibility of $\mathcal {S}^{\alpha,k}$ in Lemma \ref{lem3.3} can be regarded as a direct consequence of Lemma \ref{lem1}.
\end{remark}

Clearly, (\ref{3.8}) can be written as
\begin{align}\label{3.12}
\mathcal {S}^{\alpha,k}=\widehat{\mathcal {S}}^{\alpha,k}+\mathcal
{G}^{\alpha,k},
\end{align}
where $\mathcal {G}^{\alpha,k}=(k^{2}\ln k)\mathcal
{S}_1^{\alpha,(1)}+\ k^{2}\mathcal {S}_1^{\alpha,(2)}+\mathcal
{O}(k^{4}\ln k)$. From Lemma \ref{lem3.3}, we get
\begin{align}\label{3.13}
(\mathcal {S}^{\alpha,k})^{-1}=(\mathcal {I}+(\widehat{\mathcal
{S}}^{\alpha,k})^{-1}\mathcal {G}^{\alpha,k})(\widehat{\mathcal
{S}}^{\alpha,k})^{-1},
\end{align}
Noting that $\|(\widehat{\mathcal {S}}^{\alpha,k})^{-1}\|_{\mathcal
{L}(\mathcal {H}(\partial D),\mathcal {H}^*(\partial D))}$ is
bounded for every $k$. Thus, as $k$ is small enough, we have
\begin{align}\label{3.14}
(\mathcal {S}^{\alpha,k})^{-1}=(\widehat{\mathcal
{S}}^{\alpha,k})^{-1}-(\widehat{\mathcal
{S}}^{\alpha,k})^{-1}\mathcal {G}^{\alpha,k}(\widehat{\mathcal
{S}}^{\alpha,k})^{-1}+\mathcal {O}(k^{4}\ln^2 k).
\end{align}
Moreover, setting $\Lambda^{\alpha,k}=(\widetilde{\mathcal
{S}}^{\alpha,0})^{-1}\widehat{\mathcal {S}}^{\alpha,k}$, then
\begin{align*}
\Lambda^{\alpha,k}=&(\widetilde{\mathcal
{S}}^{\alpha,0})^{-1}(\widetilde{\mathcal
{S}}^{\alpha,0}+\Upsilon^{\alpha,k})=(\widetilde{\mathcal
{S}}^{\alpha,0})^{-1}(\widetilde{\mathcal
{S}}^{\alpha,0}+(\cdot,\varphi_0)_{\mathcal {H}^*(\partial
D)}(\mathcal {S}^{\alpha,0}[\varphi_0]-\chi(\partial D)+\tau_k))\\
=&\mathcal {I}+(\cdot,\varphi_0)_{\mathcal {H}^*(\partial
D)}(\mathcal {S}^{\alpha,0}[\varphi_0]-\chi(\partial
D)+\tau_k)\varphi_0.
\end{align*}
And then $$(\Lambda^{\alpha,k})^{-1}=\mathcal
{I}-(\cdot,\varphi_0)_{\mathcal {H}^*(\partial D)}\frac{\mathcal
{S}^{\alpha,0}[\varphi_0]-\chi(\partial D)+\tau_k}{\mathcal
{S}^{\alpha,0}[\varphi_0]+\tau_k}\varphi_0.$$

Hence, we find
\begin{align*}
\left(\widehat{\mathcal
{S}}^{\alpha,k}\right)^{-1}=&(\Lambda^{\alpha,k})^{-1}(\widetilde{\mathcal
{S}}^{\alpha,0})^{-1}=(\widetilde{\mathcal
{S}}^{\alpha,0})^{-1}-((\widetilde{\mathcal
{S}}^{\alpha,0})^{-1}[\cdot],\varphi_0)_{\mathcal {H}^*(\partial
D)}\varphi_0\\
&+\frac{((\widetilde{\mathcal
{S}}^{\alpha,0})^{-1}[\cdot],\varphi_0)_{\mathcal {H}^*(\partial
D)}}{\mathcal {S}^{\alpha,0}[\varphi_0]+\tau_k}\varphi_0,
\end{align*}
then we obtain
\begin{align}\label{3.15}
(\mathcal {S}^{\alpha,k})^{-1}=&\mathcal {L}^\alpha+\mathcal
{U}^{\alpha,k}-(k^{2}\ln k)\mathcal {L}^\alpha\mathcal
{S}_1^{\alpha,(1)}\mathcal {L}^\alpha-k^2(\mathcal
{L}^\alpha\mathcal {S}_1^{\alpha,(2)}\mathcal {L}^\alpha\nonumber\\
&-\ln k(\mathcal {U}^{\alpha,k}\mathcal {S}_1^{\alpha,(1)}\mathcal
{L}^\alpha+\mathcal {L}^\alpha\mathcal {S}_1^{\alpha,(1)}\mathcal
{U}^{\alpha,k}))+\mathcal {O}(k^2\ln^{-1}k),
\end{align}
where $\mathcal {L}^\alpha=\mathcal {P}_{\mathcal {H}_0^*(\partial
D)}(\widetilde{\mathcal {S}}^{\alpha,0})^{-1}$ and $\mathcal
{U}^{\alpha,k}=\frac{((\widetilde{\mathcal
{S}}^{\alpha,0})^{-1}[\cdot],\varphi_0)_{\mathcal {H}^*(\partial
D)}}{\mathcal {S}^{\alpha,0}[\varphi_0]+\tau_k}\varphi_0$. In
particular, $\mathcal {U}^{\alpha,k}=\mathcal {O}(\ln^{-1}k)$.

Now we consider the expansion for operator $(\mathcal
{K}^{-\alpha,0})^*$:
\begin{align}\label{3.16}
(\mathcal {K}^{-\alpha,k})^*=(\mathcal
{K}^{-\alpha,0})^*+\sum_{j=1}^\infty(k^{2j}\ln k)\mathcal
{K}_j^{-\alpha,(1)}+\sum_{j=1}^\infty k^{2j}\mathcal
{K}_j^{-\alpha,(2)},
\end{align}
where
\begin{align*}
&\mathcal
{K}_j^{-\alpha,(1)}[\psi](x)=\sum_{n\in\mathbb{Z}^2}\int_{\partial
D}b_j\frac{\partial|x-n-y|^{2j}}{\partial\nu(x)}e^{-in\cdot\alpha}\psi(y)d\sigma(y);\\
&\mathcal
{K}_j^{-\alpha,(2)}[\psi](x)=\sum_{n\in\mathbb{Z}^2}\int_{\partial
D}\frac{\partial|x-n-y|^{2j}(b_j\ln|x-n-y|+c_j)}{\partial\nu(x)}e^{-in\cdot\alpha}\psi(y)d\sigma(y).
\end{align*}
Likewise, we can prove that the norms $\|\mathcal {K}_j^{-\alpha,(1)}\|_{\mathcal {L}(\mathcal
{H}^*(\partial D),\mathcal {H}^*(\partial D))}$ and $\|\mathcal
{K}_j^{-\alpha,(2)}\|_{\mathcal {L}(\mathcal {H}^*(\partial
D),\mathcal {H}^*(\partial D))}$ are uniformly bounded with respect
to $j$. Furthermore, the series in (\ref{3.15}) is convergent in
space $\mathcal {L}(\mathcal {H}^*(\partial D),\mathcal
{H}^*(\partial D))$.

\begin{lemma}\label{lem3.4}
The operator $\mathcal {A}(\omega):\ \mathcal {H}^*(\partial
D)\rightarrow\mathcal {H}^*(\partial D)$ has the following expansion
formula:
\begin{align}\label{3.18}
\mathcal {A}(\omega)=\mathcal {A}_0+\omega^{2}\ln\omega\mathcal
{A}_1+\mathcal {O}(\omega^2),
\end{align}
where
\begin{align}\label{3.19}
&\mathcal
{A}_0=\frac{1}{2}\left(\frac{1}{\mu_m}+\frac{1}{\mu_c}\right)\mathcal
{I}+\left(\frac{1}{\mu_m}-\frac{1}{\mu_c}\right)(\mathcal
{K}^{-\alpha,0})^*;\\
&\mathcal {A}_1=\frac{1}{\mu_c}\left(\frac{1}{2}\mathcal
{I}-(\mathcal {K}^{-\alpha,0})^*\right)(\widetilde{\mathcal
{S}}^{\alpha,0})^{-1}\mathcal
{S}_1^{\alpha,(1)}\left(\varepsilon_m\mu_m\mathcal
{I}-\varepsilon_c\mu_c\mathcal {P}_{\mathcal {H}_0^*(\partial
D)}\right)\nonumber\\
&\ \ \ \ \ \ \ +\mathcal {K}_1^{-\alpha,(1)}(\varepsilon_m\mathcal
{I}-\varepsilon_c\mathcal {P}_{\mathcal {H}_0^*(\partial D)}).
\end{align}
\end{lemma}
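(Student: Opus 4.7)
The plan is to substitute the small-$\omega$ expansions (3.8), (3.15) and (3.16) directly into (2.13) and group by order in $\omega$. Since $k_c=\omega\sqrt{\varepsilon_c\mu_c}$ and $k_m=\omega\sqrt{\varepsilon_m\mu_m}$, one has $k_\bullet^{2}\ln k_\bullet=\omega^{2}\varepsilon_\bullet\mu_\bullet\ln\omega+O(\omega^{2})$, and because $\mathcal{U}^{\alpha,k_c}=O(1/\ln\omega)$ any product of $\mathcal{U}^{\alpha,k_c}$ with an $\omega^{2}\ln\omega$ factor is absorbed into $O(\omega^{2})$; each operator thus admits an explicit expansion up to order $\omega^{2}\ln\omega$ with $O(\omega^{2})$ remainder.

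For the zeroth order term, the key identity is $\mathcal{L}^{\alpha}\widehat{\mathcal{S}}^{\alpha,k_m}=\mathcal{P}_{\mathcal{H}_0^{*}(\partial D)}$: indeed $\mathcal{L}^{\alpha}\widetilde{\mathcal{S}}^{\alpha,0}=\mathcal{P}_{\mathcal{H}_0^{*}}$ by the definition of $\mathcal{L}^{\alpha}$, and $\mathcal{L}^{\alpha}\Upsilon^{\alpha,k_m}=0$ since the range of $\Upsilon^{\alpha,k_m}$ is mapped by $(\widetilde{\mathcal{S}}^{\alpha,0})^{-1}$ into $\{c\varphi_0\}$, which $\mathcal{P}_{\mathcal{H}_0^{*}}$ annihilates. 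The remaining leading $\mathcal{U}^{\alpha,k_c}$-type contributions in $(\mathcal{S}^{\alpha,k_c})^{-1}\mathcal{S}^{\alpha,k_m}$ have range in $\{c\varphi_0\}$ and are killed by the left factor $\frac{1}{2}\mathcal{I}-(\mathcal{K}^{-\alpha,0})^{*}$, since $(\mathcal{K}^{-\alpha,0})^{*}\varphi_0=\frac{1}{2}\varphi_0$. Combined with $(\frac{1}{2}\mathcal{I}-(\mathcal{K}^{-\alpha,0})^{*})\mathcal{P}_{\mathcal{H}_0^{*}}=\frac{1}{2}\mathcal{I}-(\mathcal{K}^{-\alpha,0})^{*}$, this collapses the zeroth order of $\mathcal{A}(\omega)$ to $\mathcal{A}_0$ as in (2.15) and (3.19).

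At order $\omega^{2}\ln\omega$ I would collect three contributions. From $\frac{1}{\mu_m}(\mathcal{K}^{-\alpha,k_m})^{*}$ comes $\omega^{2}(\ln\omega)\varepsilon_m\mathcal{K}_1^{-\alpha,(1)}$. From $-\frac{1}{\mu_c}(\mathcal{K}^{-\alpha,k_c})^{*}$ composed with the effective zeroth order $T_c^{(0)}=\mathcal{P}_{\mathcal{H}_0^{*}}$ comes $-\omega^{2}(\ln\omega)\varepsilon_c\mathcal{K}_1^{-\alpha,(1)}\mathcal{P}_{\mathcal{H}_0^{*}}$; together these yield the bracket $\mathcal{K}_1^{-\alpha,(1)}(\varepsilon_m\mathcal{I}-\varepsilon_c\mathcal{P}_{\mathcal{H}_0^{*}})$ in (3.19). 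The $\omega^{2}\ln\omega$-piece of $(\mathcal{S}^{\alpha,k_c})^{-1}\mathcal{S}^{\alpha,k_m}$ is obtained by combining $\mathcal{L}^{\alpha}\cdot\omega^{2}\varepsilon_m\mu_m(\ln\omega)\mathcal{S}_1^{\alpha,(1)}$ from $\mathcal{S}^{\alpha,k_m}$ with $-\omega^{2}\varepsilon_c\mu_c(\ln\omega)\mathcal{L}^{\alpha}\mathcal{S}_1^{\alpha,(1)}\mathcal{L}^{\alpha}\widehat{\mathcal{S}}^{\alpha,k_m}$ from $(\mathcal{S}^{\alpha,k_c})^{-1}$; invoking $\mathcal{L}^{\alpha}\widehat{\mathcal{S}}^{\alpha,k_m}=\mathcal{P}_{\mathcal{H}_0^{*}}$ gives $\omega^{2}(\ln\omega)\mathcal{L}^{\alpha}\mathcal{S}_1^{\alpha,(1)}(\varepsilon_m\mu_m\mathcal{I}-\varepsilon_c\mu_c\mathcal{P}_{\mathcal{H}_0^{*}})$. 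Left composition with $\frac{1}{\mu_c}(\frac{1}{2}\mathcal{I}-(\mathcal{K}^{-\alpha,0})^{*})$, together with the consequence of (3.5) that $(\frac{1}{2}\mathcal{I}-(\mathcal{K}^{-\alpha,0})^{*})\mathcal{L}^{\alpha}=(\frac{1}{2}\mathcal{I}-(\mathcal{K}^{-\alpha,0})^{*})(\widetilde{\mathcal{S}}^{\alpha,0})^{-1}$, then produces the first term of $\mathcal{A}_1$.

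The principal obstacle is to keep track of the $\mathcal{U}^{\alpha,k_c}$ and $\Upsilon^{\alpha,k_m}$ contributions: each individually carries a factor $\tau_{k_\bullet}\sim\ln\omega$ that would seem to produce terms of anomalous order, yet their combined ranges always lie in the one-dimensional subspace $\{c\varphi_0\}$, which is precisely the kernel of the factor $\frac{1}{2}\mathcal{I}-(\mathcal{K}^{-\alpha,0})^{*}$ appearing on the left in (2.13). Making these projection structures precise, together with the identity (3.5), is what reduces the full asymptotics to the clean form (3.18) with remainder $O(\omega^{2})$.
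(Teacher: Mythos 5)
Your proposal follows essentially the same route as the paper's proof: substitute the expansions \eqref{3.8}, \eqref{3.15} and \eqref{3.16} into \eqref{2.13}, use $\mathcal {L}^\alpha\Upsilon^{\alpha,k_m}=0$ together with the fact that $\frac{1}{2}\mathcal {I}-(\mathcal {K}^{-\alpha,0})^*$ annihilates the range $\{c\varphi_0\}$ of $\mathcal {U}^{\alpha,k_c}$ (the consequence of \eqref{3.5}), and then collect the $\omega^{2}\ln\omega$ contributions from $\mathcal {S}_1^{\alpha,(1)}$ and $\mathcal {K}_1^{-\alpha,(1)}$. Your write-up is, if anything, slightly more explicit than the paper's about why the logarithmically large rank-one pieces ($\Upsilon^{\alpha,k_m}$, $\mathcal {U}^{\alpha,k_c}$) do not contaminate the expansion, but the decomposition and the key identities are identical.
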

\begin{proof}
From (\ref{3.8}), (\ref{s1}) and (\ref{3.15}), we have that
\begin{align*}
&\mathcal {S}^{\alpha,k_m}=\widetilde{\mathcal
{S}}^{\alpha,0}+\Upsilon^{\alpha,k_m}+(\omega^{2}\ln \omega)\varepsilon_m\mu_m\mathcal
{S}_1^{\alpha,(1)}+\mathcal {O}(\omega^2),\\
&(\mathcal {S}^{\alpha,k_c})^{-1}=\mathcal {L}^\alpha+\mathcal
{U}^{\alpha,k_c}-(\omega^{2}\ln \omega)\varepsilon_c\mu_c\mathcal {L}^\alpha\mathcal
{S}_1^{\alpha,(1)}\mathcal {L}^\alpha+\mathcal {O}(\omega^2).
\end{align*}
Noting the definition of $\Upsilon^{\alpha,k_m}$ in (\ref{s1}), we see $\mathcal {L}^\alpha\Upsilon^{\alpha,k_m}=0$,
and then
\begin{align*}
(\mathcal {S}^{\alpha,k_c})^{-1}\mathcal {S}^{\alpha,k_m}=&\mathcal {P}_{\mathcal {H}_0^*(\partial
D)}+\mathcal {U}^{\alpha,k_c}\widetilde{\mathcal {S}}^{\alpha,0}+\mathcal {U}^{\alpha,k_c}\Upsilon^{\alpha,k_m}\\
&\ \ \ -(\omega^{2}\ln \omega)\mathcal {L}^\alpha\mathcal
{S}_1^{\alpha,(1)}\left(\varepsilon_m\mu_m\mathcal {I}-\varepsilon_c\mu_c\mathcal {P}_{\mathcal {H}_0^*(\partial
D)}\right)+\mathcal {O}(\omega^2).
\end{align*}
By using (\ref{3.5}), it yields $\left(-\frac{1}{2}\mathcal {I}+(\mathcal
{K}^{-\alpha,0})^*\right)\mathcal {U}^{\alpha,k_c}=0$. Since $-\frac{1}{2}\mathcal {I}+(\mathcal
{K}^{-\alpha,k})^*=\left(-\frac{1}{2}\mathcal {I}+(\mathcal
{K}^{-\alpha,0})^*\right)-(k^{2}\ln k)\mathcal
{K}_1^{-\alpha,(1)}+\mathcal {O}(k^2)$, we obtain the result.
\end{proof}

\begin{theorem}\label{thm3.5}
Let $\left\{\tau_j(\omega),\varphi_j(\omega)\right\},\ j=0,1,2,\cdots$, be the eigenvalue
and normalized eigenfunction pair of $\mathcal {A}(\omega)$, then, in the quasi-static regime, we have
the following expansion formula
\begin{align}\label{3.21}
&\tau_j(\omega)=\tau_j+(\omega^{2}\ln\omega)\tau_{j,1}+\mathcal
{O}(\omega^2);\\
&\varphi_j(\omega)=\varphi_j+(\omega^{2}\ln\omega)\varphi_{j,1}+\mathcal
{O}(\omega^2),\label{3.22}
\end{align}
where $\tau_j$ and $\varphi_j$ are defined by (\ref{3.7}) and in Lemma \ref{lem3.2} respectively,
\begin{align}
\tau_{j,1}=&R_{jj},\label{16}\\
\varphi_{j,1}=&\sum_{j\neq
l}\frac{R_{jl}}{\left(\frac{1}{\mu_m}-\frac{1}{\mu_c}\right)(\lambda_j-\lambda_l)}\varphi_l,\\
R_{j,l}=&(\mathcal {A}_1[\varphi_j],\varphi_l)_{\mathcal
{H}^*(\partial D)}.
\end{align}
\end{theorem}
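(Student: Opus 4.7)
The plan is to apply standard Rayleigh--Schr\"odinger perturbation theory to the asymptotic expansion of $\mathcal{A}(\omega)$ supplied by Lemma \ref{lem3.4}, treating the scale $\omega^2\ln\omega$ as the small parameter. Since the leading operator $\mathcal{A}_0$ is a linear function of the compact self-adjoint operator $(\mathcal{K}^{-\alpha,0})^*$, it admits the same complete orthonormal system $\{\varphi_j\}$ in $\mathcal{H}^*(\partial D)$, and its spectrum consists of the simple discrete eigenvalues $\tau_j$ given by (\ref{3.7}) (with at most an accumulation point at $\tau_\infty=\frac{1}{2}(1/\mu_m+1/\mu_c)$). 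For each simple $\tau_j$ separated from the rest of the spectrum, Gohberg--Sigal theory (or Kato's analytic perturbation theory) guarantees that for $\omega$ sufficiently small there exists a unique eigenpair $(\tau_j(\omega),\varphi_j(\omega))$ of $\mathcal{A}(\omega)$ continuously deforming $(\tau_j,\varphi_j)$, and that its dependence on $\omega$ inherits the $\omega^2\ln\omega$ plus $\omega^2$ structure of $\mathcal{A}(\omega)$. This justifies the two-scale ans\"atze (\ref{3.21})--(\ref{3.22}).

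Next I substitute the ans\"atze into $\mathcal{A}(\omega)\varphi_j(\omega)=\tau_j(\omega)\varphi_j(\omega)$ and invoke (\ref{3.18}), then collect terms according to their orders in $\omega^2\ln\omega$ and $\omega^2$. The $\mathcal{O}(1)$ balance $\mathcal{A}_0\varphi_j=\tau_j\varphi_j$ holds automatically. Collecting the $\mathcal{O}(\omega^2\ln\omega)$ terms yields the key identity
\begin{equation*}
(\mathcal{A}_0-\tau_j\mathcal{I})\varphi_{j,1}=\tau_{j,1}\varphi_j-\mathcal{A}_1\varphi_j.
\end{equation*}
Taking the $\mathcal{H}^*(\partial D)$-inner product of both sides with $\varphi_j$ and using that $\{\varphi_l\}$ is an orthonormal eigenbasis of $\mathcal{A}_0$ (so the left-hand side pairs to zero), I immediately obtain $\tau_{j,1}=(\mathcal{A}_1\varphi_j,\varphi_j)_{\mathcal{H}^*(\partial D)}=R_{jj}$, which is the first claim.

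For the eigenfunction correction I expand $\varphi_{j,1}=\sum_l c_l\varphi_l$ in the same basis and take the inner product with $\varphi_l$ for $l\neq j$. The left-hand side gives $c_l(\tau_l-\tau_j)$ and the right-hand side gives $-R_{jl}$, so
\begin{equation*}
c_l=\frac{R_{jl}}{\tau_j-\tau_l}=\frac{R_{jl}}{\bigl(\tfrac{1}{\mu_m}-\tfrac{1}{\mu_c}\bigr)(\lambda_j-\lambda_l)},
\end{equation*}
where the last equality uses the explicit form (\ref{3.7}) of $\tau_j$. The coefficient $c_j$ is fixed to zero by imposing the natural normalization $(\varphi_{j,1},\varphi_j)_{\mathcal{H}^*(\partial D)}=0$ that preserves $\|\varphi_j(\omega)\|=1$ to the relevant order. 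Assembling the $c_l$'s yields the stated formula for $\varphi_{j,1}$. The $\mathcal{O}(\omega^2)$ remainder follows directly from the remainder in (\ref{3.18}) together with the uniform boundedness of the resolvent $(\mathcal{A}_0-\tau_j\mathcal{I})^{-1}$ on the orthogonal complement of $\varphi_j$.

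The main obstacle is not the algebra of matching coefficients but the spectral-theoretic justification that the eigenvalues and eigenfunctions do admit the two-scale expansion with a logarithmic term: the perturbation $\omega^2\ln\omega\,\mathcal{A}_1+\mathcal{O}(\omega^2)$ is not analytic at $\omega=0$, so one must verify that Gohberg--Sigal-type contour-integral formulas for the spectral projectors can be expanded in the same mixed scale. A secondary issue is the well-separation of $\tau_j$ from the remainder of the spectrum; since $\tau_j\to\tau_\infty$ as $j\to\infty$, the expansion is valid for each fixed $j$ with an $\omega$-range that may shrink as $j$ grows, but this is sufficient for the quasi-static resonance analysis that follows. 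Under the tacit assumption that $\tau_j$ is a simple eigenvalue of $\mathcal{A}_0$ (equivalently, $\lambda_j$ is a simple eigenvalue of $(\mathcal{K}^{-\alpha,0})^*$), the above program goes through and delivers (\ref{3.21})--(\ref{16}).
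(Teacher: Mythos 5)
Your proposal is correct and follows essentially the same route as the paper, whose entire proof is the one-line instruction to substitute (\ref{3.18}), (\ref{3.21}), (\ref{3.22}) into $\mathcal{A}(\omega)[\varphi_j(\omega)]=\tau_j(\omega)\varphi_j(\omega)$ and compare coefficients; your first-order balance, the pairing with $\varphi_j$ and with $\varphi_l$ for $l\neq j$, and the identity $\tau_j-\tau_l=\bigl(\tfrac{1}{\mu_m}-\tfrac{1}{\mu_c}\bigr)(\lambda_j-\lambda_l)$ reproduce exactly the stated formulas for $\tau_{j,1}$ and $\varphi_{j,1}$. Your additional remarks on the Gohberg--Sigal justification of the mixed $\omega^2\ln\omega$ scale and on the simplicity/separation of $\tau_j$ supply rigor that the paper leaves implicit (cf.\ condition (C2)), but they do not change the method.
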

\begin{proof}
We substitute (\ref{3.18}), (\ref{3.21}) and (\ref{3.22}) in the characteristic equation $\mathcal {A}(\omega)[\varphi_j(\omega)]=\tau_j(\omega)\varphi_j(\omega)$, and compare the coefficients of $\omega$, $\omega^2$ respectively, then the results hold.
\end{proof}

Similar to \cite{AMRZ14}, we give the definition for index set of plasmonic
resonance and some mild conditions.

\begin{definition}\label{de2}
We say that $J\subseteq \mathbb{N}$ is an index set of
resonance if $\tau_j$ are close to zero when $j\in J$ and are
bounded from below when $j\in J^c$. More precisely, we choose a
threshold number $\eta_0>0$ independent of $\omega$ such that
$|\tau_j|\geq\eta_0>0$, for $j\in J^c$.
\end{definition}

Next, we assume that the following conditions are satisfied:

(C 1) The electric permittivity and magnetic permeability
$\varepsilon_m,\ \varepsilon_c,\ \mu_m,\ \mu_c$ are dimensionless
and are of order one, the particle $D$ has size of order one,
$\omega$ is dimensionless and is of order $o(1)$.

(C 2) Each eigenvalue $\lambda_j$ for $j\in J$ is a simple
eigenvalue of the operator $(\mathcal {K}^{-\alpha,0})^*$.

(C 3) Let
\begin{align}\label{3.26}
\lambda(t)=\frac{1+t}{2(1-t)}.
\end{align}
We suppose
that $\lambda\left(\frac{\mu_c}{\mu_m}\right)\neq0$, i.e. $\frac{\mu_c}{\mu_m}\neq-1$.

Notice that for $j=0$, since $\lambda_0=\frac{1}{2}$ and (\ref{3.7}), we see
$\tau_0=\frac{1}{\mu_m}$, which is of size one by our assumption.
Thus, throughout this paper, we always exclude $0$ from the index
set $J$.

We define the projection operator $P_J(\omega)$ as follows
\begin{align}\label{}
P_J(\omega)[\varphi_j(\omega)]=
\begin{cases}
\varphi_j(\omega),\ \ \ j\in J,\\
0,\ \ \ j\in J^c.
\end{cases}
\end{align}

Let $\{\tilde{\tau}_j(\omega); \tilde{\varphi}_j(\omega)\}$ be the eigenvalue system of $\mathcal {A}^*(\omega)$, we can get the
similar expansion formula
\begin{align}\label{}
&\tilde{\tau}_j(\omega)=\overline{\tau_j(\omega)};\\
&\tilde{\varphi}_j(\omega)=\varphi_j+(\omega^{2}\ln\omega)\tilde{\varphi}_{j,1}+\mathcal
{O}(\omega^2).\label{3.28}
\end{align}
Applying the eigenfunction expansion, we have
\begin{align}\label{}
P_J(\omega)[x]=\sum_{j\in J}\left(x,\tilde{\varphi}_j(\omega)\right)_{\mathcal{H}^*(\partial D)}\varphi_j(\omega),\ \ \ x\in \mathcal{H}^*(\partial D).
\end{align}

Hence, from (\ref{2.12}), we obtain
\begin{align}\label{3.30}
\psi=\mathcal {A}^{-1}(\omega)[f]=\sum_{j\in J}\frac{\left(f,\tilde{\varphi}_j(\omega)\right)_{\mathcal{H}^*(\partial D)}}{\tau_j(\omega)}\varphi_j(\omega)+\mathcal {A}^{-1}(\omega)[P_{J^c}(\omega)[f]],
\end{align}
and it can be proved that, as $\omega$ is small enough, $\|\mathcal {A}^{-1}(\omega)P_{J^c}(\omega)\|\lesssim 1+\mathcal{O}(\omega)$ (see Lemma 2.5 in \cite{AMRZ14}). Here and throughout this paper, $A\lesssim B$ means $A\leq CB$ for some positive constant $C$ independent of parameters involved. $A\approx B$ means that $A\lesssim B$ and $B\lesssim A$.

\begin{theorem}\label{thm3.8}
In the quasi-static regime, the near field $u$ has
the following representation
\begin{align}\label{3.32}
u(x)=
\begin{cases}
\mathcal {S}^{\alpha,k_c}[\phi](x),\ \ \ x\in D,\\
F_z(x)+\mathcal {S}^{\alpha,k_m}[\psi](x),\ \ \ x\in
Y\setminus\overline{D},
\end{cases}
\end{align}
where $F_z$ is defined by (\ref{2.9}), $(\phi,\psi)\in H^{-\frac{1}{2}}(\partial D)\times H^{-\frac{1}{2}}(\partial D)$ is given by
\begin{align}\label{}
&\phi=\left(\mathcal {S}^{\alpha,k_c}\right)^{-1}\mathcal {S}^{\alpha,k_m}[\psi],\\
&\psi=\sum_{j\in J}\frac{\omega\left(f_1,\varphi_j\right)_{\mathcal{H}^*(\partial D)}+\mathcal
{O}(\omega^3\ln\omega)}{\lambda\left(\frac{\mu_c}{\mu_m}\right)-\lambda_j+\mathcal
{O}(\omega^2\ln\omega)}\varphi_j+\mathcal{O}(\omega)
\end{align}
with $\lambda(t)$ being defined by (\ref{3.26}), and
\begin{align}\label{3.35}
f_1=&-\frac{i\mu_c\sqrt{\varepsilon_m\mu_m}}{4(\mu_c-\mu_m)}\bigg(\sum_{n\in\mathbb{Z}^2}\bigg(aH_1^{(1)}(k_m|x-n-z|)\nonumber\\
&-a\cdot(x-n-z)k_mH_2^{(1)}(k_m|x-n-z|)\frac{(x-n-z)}{|x-n-z|}\bigg)\frac{e^{in\cdot\alpha}}{|x-n-z|},\nu(x)\bigg),
\end{align}
here $H_1^{(1)}$, $H_2^{(1)}$ denote the Hankel functions of the first of order 1 and the Hankel functions of the second of order 1 respectively.
\end{theorem}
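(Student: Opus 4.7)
The plan is to use the representation (\ref{2.8}) from Theorem \ref{thm1} as the starting point and then substitute back the density $\psi$ produced by the spectral analysis of $\mathcal{A}(\omega)$. Since $\omega\ll 1$, Lemma \ref{lem3.3} (or Remark \ref{rem3.5}) guarantees that $\mathcal{S}^{\alpha,k_c}$ is invertible, so the first line of (\ref{2.10}) immediately gives $\phi=(\mathcal{S}^{\alpha,k_c})^{-1}\mathcal{S}^{\alpha,k_m}[\psi]$, and substituting into the second line of (\ref{2.10}) reduces everything to the operator equation $\mathcal{A}(\omega)[\psi]=f$ displayed in (\ref{2.12})--(\ref{2.14}). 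Thus the representation (\ref{3.32}) and the first formula for $\phi$ are already in hand, and all the work is in extracting the resonant part of $\psi$.

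The second step is to apply the spectral splitting (\ref{3.30}) and write $\psi=\sum_{j\in J}(f,\tilde{\varphi}_j(\omega))_{\mathcal{H}^*(\partial D)}\tau_j(\omega)^{-1}\varphi_j(\omega)+\mathcal{A}^{-1}(\omega)P_{J^c}(\omega)[f]$. The non-resonant tail is controlled by the quoted bound $\|\mathcal{A}^{-1}(\omega)P_{J^c}(\omega)\|\lesssim 1$, which together with the size of $f$ puts this contribution into the $\mathcal{O}(\omega)$ remainder. For the denominators, I would feed the expansion $\tau_j(\omega)=\tau_j+\mathcal{O}(\omega^2\ln\omega)$ from Theorem \ref{thm3.5} through the elementary identity $\tau_j=\left(\frac{1}{\mu_c}-\frac{1}{\mu_m}\right)\left[\lambda\!\left(\frac{\mu_c}{\mu_m}\right)-\lambda_j\right]$, which follows by factoring (\ref{3.7}) and comparing with (\ref{3.26}); pulling the common factor $\left(\frac{1}{\mu_c}-\frac{1}{\mu_m}\right)$ through to the numerator converts the denominator into $\lambda(\mu_c/\mu_m)-\lambda_j+\mathcal{O}(\omega^2\ln\omega)$ as claimed. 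The dual eigenfunction $\tilde{\varphi}_j(\omega)$ is replaced by $\varphi_j$ via (\ref{3.28}), the resulting error being absorbed into the $\mathcal{O}(\omega^3\ln\omega)$ term of the numerator.

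The last step is the explicit computation of $f_1$. Writing $F_z(x)=a\cdot\nabla_x G^{\alpha,k_m}(x,z)$ and differentiating the series (\ref{2.4}) twice, using the Bessel-function identities $\frac{d}{dr}H_0^{(1)}(k_m r)=-k_m H_1^{(1)}(k_m r)$ and $\frac{d}{dr}\!\left[r^{-1}H_1^{(1)}(k_m r)\right]=-k_m r^{-1}H_2^{(1)}(k_m r)$, produces the explicit series representation of $\partial_\nu F_z$ which, after collecting the overall prefactor coming from $-\frac{1}{\mu_m}$ in the definition of $f$ and from the factor $\mu_m\mu_c/(\mu_m-\mu_c)$ that arises when $\left(\frac{1}{\mu_c}-\frac{1}{\mu_m}\right)$ is pulled out of $\tau_j$, gives precisely $\omega\,f_1$ with $f_1$ as in (\ref{3.35}); the single visible factor of $\omega$ comes from $k_m=\omega\sqrt{\varepsilon_m\mu_m}$ appearing in the first differentiation of the Hankel function. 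The hardest part is the bookkeeping at this last step: every constant, every sign, and every normal orientation must be tracked carefully, and one must simultaneously verify that all the higher-order pieces of (\ref{3.8}), (\ref{3.16}), (\ref{3.21}) and (\ref{3.28}) really do fit inside the quoted error orders $\mathcal{O}(\omega^3\ln\omega)$, $\mathcal{O}(\omega^2\ln\omega)$ and $\mathcal{O}(\omega)$.
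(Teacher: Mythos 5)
Your plan follows the paper's proof essentially verbatim: the paper likewise starts from the representation of Theorem \ref{thm1}, reduces to $\mathcal{A}(\omega)[\psi]=f$, applies the spectral decomposition (\ref{3.30}) together with Theorem \ref{thm3.5} and (\ref{3.28}), and obtains $f=\omega f_1$ by differentiating the quasi-periodic Green's function. The only difference is that you make explicit the factorization $\tau_j=\left(\frac{1}{\mu_c}-\frac{1}{\mu_m}\right)\left[\lambda\!\left(\frac{\mu_c}{\mu_m}\right)-\lambda_j\right]$ and the Hankel-function recurrences, which the paper leaves implicit; these check out, so the proposal is correct.
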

\begin{proof}
According to the representation formula of quasi-periodic Green's function, it is easy to find that
\begin{align}\label{}
f=-\frac{1}{\mu_m}\frac{\partial F_z}{\partial\nu}=\omega f_1,
\end{align}
where $f_1$ is defined by (\ref{3.35}).
Then, by (\ref{3.7}), (\ref{3.28}), (\ref{3.30}) and Theorem \ref{thm3.5},
we find
\begin{align}\label{}
\psi=\mathcal {A}^{-1}(\omega)[f]=\sum_{j\in J}\frac{\omega\left(f_1,\varphi_j\right)_{\mathcal{H}^*(\partial D)}+\mathcal
{O}(\omega^3\ln\omega)}{\lambda\left(\frac{\mu_c}{\mu_m}\right)-\lambda_j+\mathcal
{O}(\omega^2\ln\omega)}\varphi_j+\mathcal{O}(\omega).
\end{align}
Thus, from (\ref{2.11}) and solution formula (\ref{2.8}), it yields the result.
\end{proof}
\begin{remark}
According to Theorem \ref{thm3.8}, we can see that if $\lambda\left(\frac{\mu_c}{\mu_m}\right)\rightarrow\lambda_j$ or $\tau_j\rightarrow0$, for some $j\in J$, the near field $u(x), x\in D$
blows up, i.e., plasmonic resonance occurs. Furthermore, in next section, we will analysis the rate of blow up of near field energy.
\end{remark}

\section{Analysis for the rate of blow up of near field energy}

In order to estimate the rate of blow up in near field energy, we first give the estimation for gradient of the solution.

\begin{lemma}\label{lem4.1}
Let $\phi=\tilde{\phi}+(\phi,\varphi_0)\varphi_0$, here $\tilde{\phi}\in\mathcal{H}_0^*(\partial D)$, for the solution
of (\ref{2.3}) in $D$, i.e., $u=\mathcal {S}^{\alpha,k_c}[\phi]$, we have the following estimation
\begin{align}
\left|\|\nabla u\|_{L^2(D)}^2-\|\tilde{\phi}\|_{\mathcal{H}^*(\partial D)}^2\right|\lesssim\left|\omega\ln\omega\right|^2\left|(\phi,\varphi_0)\right|^2.
\end{align}
\end{lemma}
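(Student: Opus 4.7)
The plan is to apply Green's identity inside $D$. Since $u = \mathcal{S}^{\alpha,k_c}[\phi]$ satisfies $\Delta u + k_c^2 u = 0$ in $D$ and the interior normal trace is, by the jump formula (\ref{2.6}), $\partial_\nu u|_- = (-\tfrac{1}{2}\mathcal{I}+(\mathcal{K}^{-\alpha,k_c})^*)[\phi]$, integration by parts gives
$$\|\nabla u\|_{L^2(D)}^2 \;=\; \Re\!\Bigl(\overline{k_c^2}\,\|u\|_{L^2(D)}^2\Bigr) \;+\; \Re\!\int_{\partial D} u\,\overline{\bigl(-\tfrac{1}{2}\mathcal{I}+(\mathcal{K}^{-\alpha,k_c})^*\bigr)[\phi]}\,d\sigma.$$
I would then insert the expansions (\ref{3.8}) of $\mathcal{S}^{\alpha,k_c}$ and (\ref{3.16}) of $(\mathcal{K}^{-\alpha,k_c})^*$ through order $\omega^{2}\ln\omega$, together with the orthogonal decomposition $\phi = \tilde{\phi} + (\phi,\varphi_0)_{\mathcal{H}^*(\partial D)}\varphi_0$, and track the four resulting pieces.

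The pivotal structural fact is that $\mathcal{S}^{\alpha,0}[\varphi_0]$ is constant on $\overline{D}$: it is harmonic in $D$, and its interior normal derivative $(-\tfrac{1}{2}\mathcal{I}+(\mathcal{K}^{-\alpha,0})^*)[\varphi_0]$ vanishes by the spectral identity $(\mathcal{K}^{-\alpha,0})^*[\varphi_0] = \tfrac{1}{2}\varphi_0$, so on the connected domain $D$ it must be constant. The dual statement $\mathcal{K}^{-\alpha,0}[\chi(\partial D)] = \tfrac{1}{2}\chi(\partial D)$ (hence $(-\tfrac{1}{2}\mathcal{I}+\mathcal{K}^{-\alpha,0})[\chi(\partial D)] = 0$) follows from the Calder\'on identity (\ref{3.2}) applied to $\varphi_0$, since $\widetilde{\mathcal{S}}^{\alpha,0}[\varphi_0] = \chi(\partial D)$. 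The $\tau_{k_c}\sum_n e^{in\cdot\alpha}\int_{\partial D}\phi\,d\sigma$-correction in (\ref{3.9}) is likewise $x$-independent. Consequently the $(\phi,\varphi_0)\varphi_0$-component of $u|_{\overline{D}}$ is constant up to an $O(\omega^{2}\ln\omega)$-tail from (\ref{3.8}), and the $(\phi,\varphi_0)\varphi_0$-component of $\partial_\nu u|_-$ is itself $O(\omega^{2}\ln\omega)(\phi,\varphi_0)$ via (\ref{3.16}) because its leading piece annihilates $\varphi_0$.

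For the $\tilde{\phi}$-part I would use the classical interior-energy identity: $u_0 := \mathcal{S}^{\alpha,0}[\tilde{\phi}]$ is harmonic in $D$, so $\int_D |\nabla u_0|^2\,dx = \int_{\partial D}\overline{u_0}\,\partial_\nu u_0|_-\,d\sigma$, and Lemma~\ref{lem3.1}, the definition (\ref{3.3}) of $(\cdot,\cdot)_{\mathcal{H}^*(\partial D)}$, and $\widetilde{\mathcal{S}}^{\alpha,0}[\tilde{\phi}] = \mathcal{S}^{\alpha,0}[\tilde{\phi}]$ (since $\tilde{\phi} \in \mathcal{H}_0^*(\partial D)$) identify this with $\|\tilde{\phi}\|^2_{\mathcal{H}^*(\partial D)}$. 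The mixed $\tilde{\phi}$--$\varphi_0$ cross terms in the boundary integral vanish at leading order: either the $\varphi_0$-side of $\partial_\nu u|_-$ is already $O(\omega^{2}\ln\omega)$, or the constant $\mathcal{S}^{\alpha,0}[\varphi_0]$-piece of $u|_{\partial D}$ pairs with $(-\tfrac{1}{2}\mathcal{I}+(\mathcal{K}^{-\alpha,0})^*)[\tilde{\phi}]$, whose total mass $\int_{\partial D}(-\tfrac{1}{2}\mathcal{I}+(\mathcal{K}^{-\alpha,0})^*)[\tilde{\phi}]\,d\sigma$ vanishes by duality against $(-\tfrac{1}{2}\mathcal{I}+\mathcal{K}^{-\alpha,0})[\chi(\partial D)] = 0$. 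Multiplying the $\tau_{k_c} = O(\ln\omega)$-factor in the $(\phi,\varphi_0)$-piece of $u|_{\partial D}$ with the $O(\omega^{2}\ln\omega)$-factor in the $(\phi,\varphi_0)$-piece of $\partial_\nu u|_-$ produces the claimed $|\omega\ln\omega|^2|(\phi,\varphi_0)|^2$ bound, while the volume term $|k_c|^2\|u\|^2_{L^2(D)}$ supplies a subdominant $O(\omega^2)$-correction.

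The main obstacle will be the delicate bookkeeping required to verify that every mixed cross term involving both $\tilde{\phi}$ and $(\phi,\varphi_0)\varphi_0$ genuinely cancels by one of the identities above, so the residual error carries only the factor $|(\phi,\varphi_0)|^2$, and to absorb the $O(\omega^2\|u\|^2_{L^2(D)})$-correction from the volume term multiplicatively into $\|\tilde{\phi}\|^2_{\mathcal{H}^*(\partial D)}$ in the quasi-static regime. The whole argument hinges on the dual use of the constancy of $\mathcal{S}^{\alpha,0}[\varphi_0]$ on $\overline{D}$ in conjunction with the eigen-identity $(\mathcal{K}^{-\alpha,0})^*\varphi_0 = \tfrac{1}{2}\varphi_0$, which together confine all $\varphi_0$-contributions to higher order in $\omega$.
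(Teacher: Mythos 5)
Your overall route coincides with the paper's: Green's identity in $D$, the jump formula for $\partial_\nu\mathcal{S}^{\alpha,k_c}\big|_-$, the low-frequency expansions (\ref{3.8}) and (\ref{3.16}), and the decomposition $\phi=\tilde\phi+(\phi,\varphi_0)\varphi_0$ with the eigenrelation $(\mathcal{K}^{-\alpha,0})^*[\varphi_0]=\frac12\varphi_0$ used to push every $\varphi_0$-contribution to order $\omega^{2}\ln\omega$. Your handling of the $\varphi_0$-block (constancy of $\mathcal{S}^{\alpha,0}[\varphi_0]$ on $\overline D$, and the vanishing of $\int_{\partial D}(-\frac12\mathcal{I}+(\mathcal{K}^{-\alpha,0})^*)[\tilde\phi]\,d\sigma$) is in fact argued more carefully than in the paper, which simply asserts $\mathcal{S}^{\alpha,0}[\varphi_0]=0$.

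The genuine gap is the claimed exact identification of the $\tilde\phi$-part. The interior energy identity gives
\begin{align*}
\int_D|\nabla u_0|^2\,dx=\int_{\partial D}\overline{u_0}\,\Bigl(-\tfrac12\mathcal{I}+(\mathcal{K}^{-\alpha,0})^*\Bigr)[\tilde\phi]\,d\sigma
=\sum_{j\ge1}\Bigl(\tfrac12-\lambda_j\Bigr)\bigl|(\tilde\phi,\varphi_j)_{\mathcal{H}^*(\partial D)}\bigr|^2,
\end{align*}
because the factor $-\frac12\mathcal{I}+(\mathcal{K}^{-\alpha,0})^*$ contributes the eigenvalue $\lambda_j-\frac12$ in each mode; this is \emph{not} $\|\tilde\phi\|^2_{\mathcal{H}^*(\partial D)}=\sum_{j\ge1}|(\tilde\phi,\varphi_j)|^2$, only comparable to it, with constants controlled by $\inf_{j\ge1}(\frac12-\lambda_j)>0$. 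The Calder\'on identity and the definition (\ref{3.3}) do not remove this spectral factor. Consequently your scheme cannot bound the difference $\|\nabla u\|^2_{L^2(D)}-\|\tilde\phi\|^2_{\mathcal{H}^*(\partial D)}$ by $|\omega\ln\omega|^2|(\phi,\varphi_0)|^2$; what it actually yields --- and what the paper's own proof delivers via exactly this eigenfunction computation, recorded as the two-sided estimate (\ref{4.3}) --- is the comparability $\|\nabla u\|^2_{L^2(D)}\approx\|\tilde\phi\|^2_{\mathcal{H}^*(\partial D)}+\mathcal{O}(|\omega\ln\omega|^2|(\phi,\varphi_0)|^2)$, which is all that is used later in Theorem \ref{thm4.3}. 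You should replace the claimed identification by this spectral computation and state the conclusion in comparability form. (A related caveat applies to the volume term $|k_c|^2\|u\|^2_{L^2(D)}$: it carries a factor $\|\phi\|^2_{\mathcal{H}^*(\partial D)}$, not merely $|(\phi,\varphi_0)|^2$, and must be absorbed multiplicatively into the $\|\tilde\phi\|^2$ block, as you anticipate in your final paragraph.)
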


\begin{proof}
Using divergence theorem in $D$, we have that
\begin{align*}
\int_D|\nabla u|^2dx\leq |k_c|^2\int_D|u|^2dx+\left|\int_{\partial D}u\overline{\frac{\partial
u}{\partial\nu}\Big|_-}d\sigma\right|.
\end{align*}
By (\ref{3.8}) and solution formula (\ref{2.8}), it deduces
\begin{align*}
|u|^2=\left|\mathcal {S}^{\alpha,k_c}[\phi]\right|^2\lesssim\left|\widehat{\mathcal
{S}}^{\alpha,k_c}[\phi]\right|^2+\left|\sum_{j=1}^\infty(k_c^{2j}\ln k_c)\mathcal
{S}_j^{\alpha,(1)}[\phi]\right|^2+\left|\sum_{j=1}^\infty k_c^{2j}\mathcal
{S}_j^{\alpha,(2)}[\phi]\right|^2.
\end{align*}
Since $\tau_{k_c}\lesssim \ln k_c$ and $\omega$ is small enough, we get
\begin{align}\label{4.2}
\|u\|_{L^2(D)}^2\lesssim |\ln k_c|\|\phi\|_{\mathcal{H}^*(\partial D)}^2.
\end{align}
Noting the Poisson summation formula
\begin{align*}
\sum_{n\in\mathbb{Z}^2}e^{in\cdot\alpha}=\sum_{n\in\mathbb{Z}^2}\delta(\frac{\alpha}{2\pi}-n),
\end{align*}
from (\ref{3.8}), (\ref{3.16}) and the jump formula, for every $\alpha\in(0,2\pi)^2$, we get
\begin{align}
\left|\int_{\partial D}u\overline{\frac{\partial
u}{\partial\nu}\Big|_-}d\sigma\right|=&\left|\int_{\partial D}\mathcal {S}^{\alpha,k_c}[\phi]\overline{\frac{\partial
}{\partial\nu}\mathcal {S}^{\alpha,k_c}[\phi]\Big|_-}d\sigma\right|\nonumber\\
=&\left|\int_{\partial D}\mathcal {S}^{\alpha,k_c}[\phi]\overline{\left(-\frac{1}{2}\mathcal
{I}+(\mathcal {K}^{-\alpha,k_c})^*\right)[\phi]}d\sigma\right|\nonumber\\
\leq&\left|\int_{\partial D}\mathcal {S}^{\alpha,0}[\phi]\overline{\left(-\frac{1}{2}\mathcal
{I}+(\mathcal{K}^{-\alpha,0})^*\right)[\phi]}d\sigma\right|+\left|E\right|\nonumber,
\end{align}
where
\begin{align}
&E=\int_{\partial D}\mathcal{S}^{\alpha,0}[\phi]\overline{\sum_{j=1}^\infty(k_c^{2j}\ln k_c)\mathcal{K}_j^{-\alpha,(1)}[\phi]}d\sigma
+\int_{\partial D}\mathcal{S}^{\alpha,0}[\phi]\overline{\sum_{j=1}^\infty k_c^{2j}\mathcal{K}_j^{-\alpha,(2)}[\phi]}d\sigma\nonumber\\
&+\int_{\partial D}\sum_{j=1}^\infty(k_c^{2j}\ln k_c)\mathcal
{S}_j^{\alpha,(1)}[\phi]\overline{\left(-\frac{1}{2}\mathcal
{I}+(\mathcal {K}^{-\alpha,k_c})^*\right)[\phi]}d\sigma\nonumber\\
&+\int_{\partial D}\sum_{j=1}^\infty k_c^{2j}\mathcal
{S}_j^{\alpha,(2)}[\phi]\overline{\left(-\frac{1}{2}\mathcal
{I}+(\mathcal {K}^{-\alpha,k_c})^*\right)[\phi]}d\sigma\nonumber.
\end{align}
For $k_c$ small enough, it is easy to find that $|E|\lesssim\left|k_c\ln k_c\right|^2\|\phi\|_{\mathcal{H}^*(\partial D)}^2$. Next, we estimate
$\left|\int_{\partial D}\mathcal {S}^{\alpha,0}[\phi]\overline{\left(-\frac{1}{2}\mathcal
{I}+(\mathcal{K}^{-\alpha,0})^*\right)[\phi]}d\sigma\right|$.

Since $(\mathcal {K}^{-\alpha,k_c})^*[\varphi_0]=\frac{1}{2}\varphi_0$ and $\mathcal{S}^{\alpha,0}[\varphi_0]=0$, for $\phi=\tilde{\phi}+(\phi,\varphi_0)\varphi_0$, we see that
\begin{align*}
&\int_{\partial D}\mathcal {S}^{\alpha,0}[\phi]\overline{\left(-\frac{1}{2}\mathcal
{I}+(\mathcal{K}^{-\alpha,0})^*\right)[\phi]}d\sigma=\int_{\partial D}\mathcal {S}^{\alpha,0}[\tilde{\phi}]\overline{\left(-\frac{1}{2}\mathcal
{I}+(\mathcal{K}^{-\alpha,0})^*\right)[\tilde{\phi}]}d\sigma\\
=&\int_{\partial D}\mathcal {S}^{\alpha,0}\left[\sum_{j=1}^\infty(\tilde{\phi},\varphi_j)\varphi_j\right]\overline{\left(-\frac{1}{2}\mathcal
{I}+(\mathcal{K}^{-\alpha,0})^*\right)\left[\sum_{l=1}^\infty(\tilde{\phi},\varphi_l)\varphi_l\right]}d\sigma\\
=&\int_{\partial D}\sum_{j=1}^\infty(\tilde{\phi},\varphi_j)\mathcal{S}^{\alpha,0}[\varphi_j]\overline{\sum_{l=1}^\infty(\tilde{\phi},\varphi_l)
\left(-\frac{1}{2}+\lambda_l\right)\varphi_l}d\sigma\\
=&\sum_{j,l=1}^\infty\left(-\frac{1}{2}+\lambda_l\right)\overline{(\tilde{\phi},\varphi_l)}(\tilde{\phi},\varphi_j)\int_{\partial D}\mathcal{S}^{\alpha,0}[\varphi_j]\overline{\varphi_l}d\sigma,
\end{align*}
Notice that $\int_{\partial D}\mathcal{S}^{\alpha,0}[\varphi_j]\overline{\varphi_l}d\sigma=-\left(\varphi_l,\varphi_j\right)_{\mathcal{H}^*(\partial D)}=-\delta_{lj}$ (the Kronecker's delta), we have that
\begin{align*}
\int_{\partial D}\mathcal {S}^{\alpha,0}[\phi]\overline{\left(-\frac{1}{2}\mathcal
{I}+(\mathcal{K}^{-\alpha,0})^*\right)[\phi]}d\sigma
=\sum_{j=1}^\infty\left(-\frac{1}{2}+\lambda_j\right)\left|(\tilde{\phi},\varphi_j)\right|^2.
\end{align*}
Since $\lambda_j\in\left(-\frac{1}{2},\frac{1}{2}\right)$ $(j\geq1)$, it implies
\begin{align}\label{4.3}
\left|\int_{\partial D}\mathcal {S}^{\alpha,0}[\phi]\overline{\left(-\frac{1}{2}\mathcal
{I}+(\mathcal{K}^{-\alpha,0})^*\right)[\phi]}d\sigma
\right|\approx\|\tilde{\phi}\|_{\mathcal{H}^*(\partial D)}^2.
\end{align}
Combing (\ref{4.2}), (\ref{4.3}) and the estimation of $E$, we obtain the result.
\end{proof}

For simplicity, let $\sigma=\Re\left(\frac{1}{\mu_c(\omega)}\right)$, $\delta=\Im\left(\frac{1}{\mu_c(\omega)}\right)<0$, then
$\tau_j$ given by (\ref{3.7}) can be written as
\begin{align}\label{4.4}
\tau_j=\frac{1}{2}(\sigma+\mu_m^{-1})-(\sigma-\mu_m^{-1})\lambda_j+\delta(\frac{1}{2}-\lambda_j)i.
\end{align}
Next, we define the approximate
density $\psi_0$ as the solution of equation $\mathcal{A}_0[\psi_0]=f$, here $\mathcal{A}_0$ is given by (\ref{3.19}), then,
similar to (\ref{3.30}), applying the eigenfunction expansion, it follows that
\begin{align}\label{4.5}
\psi_0=\mathcal {A}_0^{-1}[f]=\sum_{j=0}^\infty\frac{\left(f,\varphi_j\right)_{\mathcal{H}^*(\partial D)}}{\tau_j}\varphi_j.
\end{align}
\begin{lemma}\label{lem4.2}
Under Conditions (C1), (C2) and (C3), $\psi_0$ is given by (\ref{4.5}) and has the decomposition $\psi_0=\tilde{\psi}_0+c\varphi_0$, ($\tilde{\psi}_0\in\mathcal{H}^*(\partial D)$,
$c$ is a constant). Moreover, let $\sigma=\Re\left(\mu_c^{-1}(\omega)\right)$, $\delta=\Im\left(\mu_c^{-1}(\omega)\right)$. Then, for sufficiently small $|\delta|$, we have that

(1) $\|\mathcal{A}_0^{-1}\|_{\mathcal{L}(\mathcal{H}^*(\partial D),\mathcal{H}^*(\partial D))}\lesssim\delta^{-1}$.

(2) If $\lambda(\sigma^{-1}\mu_m^{-1})\neq\lambda_j, (j\geq0)$, then $\|\mathcal{A}_0^{-1}\|_{\mathcal{L}(\mathcal{H}^*(\partial D),\mathcal{H}^*(\partial D))}\lesssim C$ for some positive constant $C$.

(3) If $\lambda(\sigma^{-1}\mu_m^{-1})=\lambda_j$ for some $j\geq1$, then $\|\tilde{\psi}_0\|_{\mathcal{H}^*(\partial D)}\gtrsim|\delta|^{-1}\left|\left(f,\varphi_j\right)_{\mathcal{H}^*(\partial D)}\right|$.
\end{lemma}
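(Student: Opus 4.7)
\textbf{Proof proposal for Lemma \ref{lem4.2}.} The plan is to read off everything from the spectral representation of $\mathcal{A}_0$. From (\ref{3.6}) and (\ref{4.5}),
\begin{align*}
\mathcal {A}_0^{-1}[g]=\sum_{j=0}^\infty\frac{(g,\varphi_j)_{\mathcal{H}^*(\partial D)}}{\tau_j}\varphi_j,
\end{align*}
so by the orthonormality of $\{\varphi_j\}$ in $\mathcal{H}^*(\partial D)$ we have $\|\mathcal {A}_0^{-1}\|_{\mathcal{L}(\mathcal{H}^*(\partial D),\mathcal{H}^*(\partial D))}=\sup_{j\geq0}|\tau_j|^{-1}$. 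The decomposition $\psi_0=\tilde{\psi}_0+c\varphi_0$ is obtained by isolating the $j=0$ term of (\ref{4.5}), with $c=\tau_0^{-1}(f,\varphi_0)_{\mathcal{H}^*(\partial D)}=\mu_m(f,\varphi_0)_{\mathcal{H}^*(\partial D)}$ (since $\lambda_0=\tfrac12$ makes $\tau_0=\mu_m^{-1}$ in (\ref{4.4})), and $\tilde{\psi}_0=\sum_{j\geq1}\tau_j^{-1}(f,\varphi_j)_{\mathcal{H}^*(\partial D)}\varphi_j$. Thus every assertion reduces to an estimate on $|\tau_j|$.

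The key algebraic step is to rewrite (\ref{4.4}) using (\ref{3.26}) as
\begin{align*}
\tau_j=(\sigma-\mu_m^{-1})\bigl(\lambda(\sigma^{-1}\mu_m^{-1})-\lambda_j\bigr)+i\,\delta\bigl(\tfrac12-\lambda_j\bigr),
\end{align*}
so the real part of $\tau_j$ vanishes exactly when $\lambda_j=\lambda(\sigma^{-1}\mu_m^{-1})$ (the degenerate case $\sigma=\mu_m^{-1}$ is handled separately), while the imaginary part is controlled by $\tfrac12-\lambda_j$. The auxiliary fact that I will use throughout is the spectral gap $\eta:=\tfrac12-\sup_{j\geq1}\lambda_j>0$; this follows from the simplicity of the eigenvalue $\lambda_0=\tfrac12$ in Lemma \ref{lem3.2} together with the compactness of $(\mathcal {K}^{-\alpha,0})^*$, which forbids accumulation at any nonzero value.

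For (1), the case $j=0$ gives $|\tau_0|=\mu_m^{-1}$, independent of $\delta$. For $j\geq1$, the spectral gap yields $|\mathrm{Im}\,\tau_j|=|\delta|(\tfrac12-\lambda_j)\geq\eta|\delta|$, so $|\tau_j|\geq\eta|\delta|$; combining the two cases, for $|\delta|$ small enough $\inf_{j\geq0}|\tau_j|\gtrsim|\delta|$, which gives $\|\mathcal {A}_0^{-1}\|\lesssim|\delta|^{-1}$. For (2), the hypothesis $\lambda(\sigma^{-1}\mu_m^{-1})\neq\lambda_j$ for all $j\geq0$, combined with $\lambda_j\to 0$ and the fact that condition (C3) excludes $\lambda(\sigma^{-1}\mu_m^{-1})=0$, forces $d:=\inf_{j\geq0}|\lambda(\sigma^{-1}\mu_m^{-1})-\lambda_j|>0$; hence $|\tau_j|\geq|\mathrm{Re}\,\tau_j|\geq|\sigma-\mu_m^{-1}|\,d\gtrsim 1$ uniformly in $j$ (the sub-case $\sigma=\mu_m^{-1}$ gives $\mathrm{Re}\,\tau_j\equiv\mu_m^{-1}$ trivially). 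For (3), if $\lambda(\sigma^{-1}\mu_m^{-1})=\lambda_j$ for some $j\geq1$, then (C2) makes this $j$ unique, $\mathrm{Re}\,\tau_j=0$, and the spectral gap again gives $|\tau_j|=|\delta|(\tfrac12-\lambda_j)\approx|\delta|$; Bessel's inequality applied to $\tilde{\psi}_0$ then yields
\begin{align*}
\|\tilde{\psi}_0\|_{\mathcal{H}^*(\partial D)}\geq|\tau_j|^{-1}\bigl|(f,\varphi_j)_{\mathcal{H}^*(\partial D)}\bigr|\gtrsim|\delta|^{-1}\bigl|(f,\varphi_j)_{\mathcal{H}^*(\partial D)}\bigr|,
\end{align*}
as claimed.

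The only non-routine point is establishing the spectral gap $\eta>0$, which is needed in both (1) and (3); the paper does not state it explicitly, but it is a direct consequence of Lemma \ref{lem3.2}(2). Once this is secured, the rest is a clean bookkeeping on the scalars $\tau_j$.
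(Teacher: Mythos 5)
Your proof is correct and follows essentially the same route as the paper's: read off the bounds from the spectral representation $\mathcal{A}_0^{-1}[g]=\sum_j\tau_j^{-1}(g,\varphi_j)_{\mathcal{H}^*(\partial D)}\varphi_j$ and estimate the scalars $\tau_j$ case by case. You actually make explicit two points the paper leaves implicit --- the identity $\Re\tau_j=(\sigma-\mu_m^{-1})\bigl(\lambda(\sigma^{-1}\mu_m^{-1})-\lambda_j\bigr)$ and the spectral gap $\sup_{j\geq1}\lambda_j<\tfrac12$ needed to control $|\Im\tau_j|$ --- which strengthens rather than changes the argument.
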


\begin{proof}
(1) Since, for $j\neq0$, $\left|\tau_j^{-1}\right|\lesssim\frac{1}{|\delta|(\frac{1}{2}-\lambda_j)}\lesssim|\delta|^{-1}$, then it find
\begin{align}\label{}
\|\psi_0\|_{\mathcal{H}^*(\partial D)}^2\lesssim|\delta|^{-2}\sum_{j=0}^\infty\left|\left(f,\varphi_j\right)_{\mathcal{H}^*(\partial D)}\right|^2\lesssim|\delta|^{-2}\|f\|_{\mathcal{H}^*(\partial D)}^2.
\end{align}
Thus, $\|\mathcal{A}_0^{-1}\|_{\mathcal{L}(\mathcal{H}^*(\partial D),\mathcal{H}^*(\partial D))}\lesssim|\delta|^{-1}$.

(2) If $\lambda(\sigma^{-1}\mu_m^{-1})\neq\lambda_j$, then, for $j\geq0$, we find $\left|\lambda(\sigma\mu_m^{-1})-\lambda_j\right|\geq c_0$, here $c_0$
is a positive constant. Hence, $\left|\tau_j^{-1}\right|\lesssim1$ and $\|\psi_0\|_{\mathcal{H}^*(\partial D)}^2\lesssim\|f\|_{\mathcal{H}^*(\partial D)}^2$, i.e.,
$\|\mathcal{A}_0^{-1}\|_{\mathcal{L}(\mathcal{H}^*(\partial D),\mathcal{H}^*(\partial D))}\lesssim C$.

(3) If $\lambda(\sigma^{-1}\mu_m^{-1})=\lambda_j$ for some $j\geq1$, then
\begin{align}\label{}
\|\tilde{\psi}_0\|_{\mathcal{H}^*(\partial D)}\gtrsim\left|\left(\tilde{\psi}_0,\varphi_j\right)_{\mathcal{H}^*(\partial D)}\right|\gtrsim\frac{\left|\left(f,\varphi_j\right)_{\mathcal{H}^*(\partial D)}\right|}{\left|\delta(\frac{1}{2}-\lambda_j)\right|}\gtrsim|\delta|^{-1}\left|\left(f,\varphi_j\right)_{\mathcal{H}^*(\partial D)}\right|.
\end{align}
\end{proof}

\begin{theorem}\label{thm4.3}
Let $u$ be the solution of (\ref{2.3}), and $\sigma=\Re\left(\mu_c^{-1}(\omega)\right)$, $\delta=\Im\left(\mu_c^{-1}(\omega)\right)$.
Suppose that $|\delta|^{-1}\omega\leq c_1$ for sufficiently small $c_1$, we have

(1) If $\lambda(\sigma^{-1}\mu_m^{-1})\neq\lambda_j$ for any $j\geq0$, then there exist a constant $C$ independent of $\delta$ such that
\begin{align}
\left\|\nabla u\right\|_{L^2(D)}\leq C.
\end{align}

(2) If $\lambda(\sigma^{-1}\mu_m^{-1})=\lambda_j$ for some $j\geq1$, let $z$ be such that $a\cdot\nabla_z\mathcal {S}^{\alpha,0}[\varphi_j](z)\neq0$.
Then
\begin{align}\label{4.9}
\left\|\nabla u\right\|_{L^2(D)}\approx|\delta|^{-1}
\end{align}
as $|\delta|$ is small enough.
\end{theorem}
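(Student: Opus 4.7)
The plan is to trace through the chain $u \to \phi \to \psi \to \psi_0$ (where $\psi_0 = \mathcal{A}_0^{-1}[f]$), reducing the gradient norm to a boundary density that will be controlled by Lemma~\ref{lem4.2}. First I will invoke Lemma~\ref{lem4.1}, which gives
$$\|\nabla u\|_{L^2(D)}^2 = \|\tilde\phi\|^2_{\mathcal{H}^*(\partial D)} + O\bigl(|\omega\ln\omega|^2\, |(\phi,\varphi_0)|^2\bigr),$$
so the task reduces to controlling $\|\tilde\phi\|_{\mathcal{H}^*(\partial D)}$ while keeping the $\varphi_0$-component error of lower order.

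Next, using $\phi = (\mathcal{S}^{\alpha,k_c})^{-1}\mathcal{S}^{\alpha,k_m}[\psi]$ together with the expansions derived in the proof of Lemma~\ref{lem3.4}---namely $(\mathcal{S}^{\alpha,k_c})^{-1}\mathcal{S}^{\alpha,k_m} = \mathcal{P}_{\mathcal{H}_0^*(\partial D)} + \mathcal{U}^{\alpha,k_c}(\widetilde{\mathcal{S}}^{\alpha,0} + \Upsilon^{\alpha,k_m}) + O(\omega^2\ln\omega)$, combined with the fact that $\mathcal{U}^{\alpha,k_c}$ is a rank-one operator of norm $O(|\ln\omega|^{-1})$ whose range lies in $\mathrm{span}(\varphi_0)$---I obtain $\tilde\phi = \tilde\psi + O(\omega^2\ln\omega)\|\psi\|$ and $|(\phi,\varphi_0)| \lesssim |\ln\omega|^{-1}\|\psi\|$. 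To pass from $\psi = \mathcal{A}^{-1}(\omega)[f]$ to the static density $\psi_0$, I will combine the perturbation bound $\mathcal{A}(\omega) - \mathcal{A}_0 = O(\omega^2\ln\omega)$ of Lemma~\ref{lem3.4} with Lemma~\ref{lem4.2}(1): since $\|\mathcal{A}_0^{-1}\| \lesssim |\delta|^{-1}$, the smallness hypothesis $|\delta|^{-1}\omega \leq c_1$ forces $\|\mathcal{A}_0^{-1}(\mathcal{A}(\omega) - \mathcal{A}_0)\| \ll 1$, and a Neumann series then gives $\|\psi\| \approx \|\psi_0\|$ with $\tilde\psi = \tilde\psi_0(1+o(1))$.

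For case (1), Lemma~\ref{lem4.2}(2) gives $\|\psi_0\| \lesssim \|f\| \lesssim 1$ (indeed $f = -\mu_m^{-1}\partial F_z/\partial\nu$ is $O(1)$ in the quasi-static regime), so both $\|\tilde\phi\|$ and $|(\phi,\varphi_0)|$ are $O(1)$ and the reduction formula yields $\|\nabla u\|_{L^2(D)} \lesssim 1$. For case (2), the upper bound $\|\tilde\psi_0\| \leq \|\psi_0\| \lesssim |\delta|^{-1}$ from Lemma~\ref{lem4.2}(1) is matched by the lower bound $\|\tilde\psi_0\| \gtrsim |\delta|^{-1}|(f,\varphi_j)_{\mathcal{H}^*(\partial D)}|$ from Lemma~\ref{lem4.2}(3). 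A short Green's-identity calculation---pairing $f$ with $\mathcal{S}^{\alpha,0}[\varphi_j]$, which is harmonic in $D$, and using $\nabla_x G^{\alpha,0}(x,z) = -\nabla_z G^{\alpha,0}(x,z)$---will show that $(f,\varphi_j)$ is a nonzero scalar multiple of $a\cdot\nabla_z\mathcal{S}^{\alpha,0}[\varphi_j](z)$, which is non-zero by the hypothesis on $z$. Hence $\|\tilde\phi\| \approx |\delta|^{-1}$, while the error term in the first display is at most $|\omega\ln\omega|^2\cdot|\ln\omega|^{-2}|\delta|^{-2} = \omega^2|\delta|^{-2} \ll |\delta|^{-2}$, delivering $\|\nabla u\|_{L^2(D)} \approx |\delta|^{-1}$.

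The hardest part will be the Neumann-series passage from $\psi$ to $\psi_0$ in the resonant case: since $\|\mathcal{A}_0^{-1}\|$ blows up like $|\delta|^{-1}$, the smallness hypothesis $|\delta|^{-1}\omega \leq c_1$ is essential---without it the $O(\omega^2\ln\omega)$ perturbation of $\mathcal{A}(\omega)$ could corrupt the leading-order behaviour of $\psi_0$ and destroy the $\approx |\delta|^{-1}$ lower bound. A secondary subtlety is identifying $(f,\varphi_j)_{\mathcal{H}^*(\partial D)}$ with a constant multiple of $a\cdot\nabla_z\mathcal{S}^{\alpha,0}[\varphi_j](z)$ at leading order in $\omega$; this rests on isolating the leading contribution $F_z(x) \approx a\cdot\nabla_x G^{\alpha,0}(x,z)$ and applying Green's identity in $D$ against the eigenfunction $\varphi_j\in\mathcal{H}_0^*(\partial D)$.
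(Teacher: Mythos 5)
Your proposal follows essentially the same route as the paper: reduce via Lemma \ref{lem4.1} to the boundary density, pass from $\psi$ to $\psi_0=\mathcal{A}_0^{-1}[f]$ by a Neumann series controlled by Lemma \ref{lem4.2}(1) and the hypothesis $|\delta|^{-1}\omega\leq c_1$, invoke Lemma \ref{lem4.2}(2)--(3) for the two cases, and verify $(f,\varphi_j)_{\mathcal{H}^*(\partial D)}\neq0$ by the same Green's-identity computation reducing it to $a\cdot\nabla_z\mathcal{S}^{\alpha,0}[\varphi_j](z)$. The only cosmetic difference is that you control the $\varphi_0$-component error via the rank-one structure of $\mathcal{U}^{\alpha,k_c}$, while the paper bounds $|(\psi,\varphi_0)_{\mathcal{H}^*(\partial D)}|$ directly using $\tau_0=1/\mu_m$ being of order one; both suffice.
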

\begin{proof}
(1) From (\ref{3.18}), it follows that
\begin{align}\label{}
\mathcal {A}(\omega)=\mathcal{A}_0\left(\mathcal{I}+\mathcal{A}_0^{-1}\left(\omega^{2}\ln\omega\mathcal
{A}_1+\mathcal {O}(\omega^2)\right)\right),
\end{align}
and then
\begin{align}\label{}
\psi=\left(\mathcal{I}+\mathcal{A}_0^{-1}\left(\omega^{2}\ln\omega\mathcal
{A}_1+\mathcal {O}(\omega^2)\right)\right)^{-1}\mathcal{A}_0^{-1}[f].
\end{align}
By using Lemma \ref{lem4.2} (1), we see
\begin{align}\label{4.10}
\left\|\mathcal{A}_0^{-1}\left(\omega^{2}\ln\omega\mathcal
{A}_1+\mathcal {O}(\omega^2)\right)\right\|_{\mathcal{L}(\mathcal{H}^*(\partial D),\mathcal{H}^*(\partial D))}\lesssim|\delta|^{-1}\omega.
\end{align}
Therefore,
\begin{align}
\left\|\psi-\psi_0\right\|_{\mathcal{H}^*(\partial D)}=&\left\|\left(\mathcal{I}+\mathcal{A}_0^{-1}\left(\omega^{2}\ln\omega\mathcal
{A}_1+\mathcal {O}(\omega^2)\right)\right)^{-1}\mathcal{A}_0^{-1}[f]-\mathcal{A}_0^{-1}[f]\right\|_{\mathcal{H}^*(\partial D)}\nonumber\\
\lesssim&|\delta|^{-1}\omega\left\|\mathcal{A}_0^{-1}[f]\right\|_{\mathcal{H}^*(\partial D)}\nonumber\\
=&|\delta|^{-1}\omega\left\|\psi_0\right\|_{\mathcal{H}^*(\partial D)}.\label{4.11}
\end{align}
If $\lambda(\sigma^{-1}\mu_m^{-1})\neq\lambda_j$, by Lemma \ref{lem4.2} (2), we get
\begin{align*}
\left\|\psi\right\|_{\mathcal{H}^*(\partial D)}\lesssim(1+|\delta|^{-1}\omega)\left\|\psi_0\right\|_{\mathcal{H}^*(\partial D)}
=(1+|\delta|^{-1}\omega)\left\|\mathcal{A}_0^{-1}[f]\right\|_{\mathcal{H}^*(\partial D)}\lesssim\left\|f\right\|_{\mathcal{H}^*(\partial D)}.
\end{align*}
Then, by Lemma \ref{lem4.1}, Lemma \ref{lem1} and Remark \ref{rem3.5}, it yields
\begin{align*}
\left\|\nabla u\right\|_{L^2(D)}=\left\|\nabla\mathcal{S}^{\alpha,k_c}[\phi]\right\|_{\mathcal{H}(\partial D)}\lesssim\left\|\phi\right\|_{\mathcal{H}^*(\partial D)}\lesssim\left\|\psi\right\|_{\mathcal{H}^*(\partial D)}\lesssim\left\|f\right\|_{\mathcal{H}^*(\partial D)}.
\end{align*}

(2) If $\lambda(\sigma^{-1}\mu_m^{-1})=\lambda_j$, ($j\geq1$), then, from Lemma \ref{lem4.2} (3), we find
\begin{align}\label{4.12}
\|\tilde{\psi}_0\|_{\mathcal{H}^*(\partial D)}\gtrsim|\delta|^{-1}\left|\left(f,\varphi_j\right)_{\mathcal{H}^*(\partial D)}\right|.
\end{align}
Moreover, by (\ref{4.11}), it deduces
\begin{align}\label{4.13}
|\delta|^{-1}\omega\left\|\psi_0\right\|_{\mathcal{H}^*(\partial D)}\gtrsim\left\|\psi-\psi_0\right\|_{\mathcal{H}^*(\partial D)}
\gtrsim\left\|\tilde{\psi}_0\right\|_{\mathcal{H}^*(\partial D)}-\left\|\tilde{\psi}\right\|_{\mathcal{H}^*(\partial D)},
\end{align}
where $\tilde{\psi}=\psi-(\psi,\varphi_0)_{\mathcal{H}^*(\partial D)}\varphi_0$.

Combining now (\ref{4.12}) and (\ref{4.13}), and noting $|\delta|^{-1}\omega\leq c_1$,  for sufficiently small $|\delta|$, we have that
\begin{align}\label{}
\left\|\tilde{\psi}\right\|_{\mathcal{H}^*(\partial D)}\gtrsim\left\|\tilde{\psi}_0\right\|_{\mathcal{H}^*(\partial D)}-|\delta|^{-1}\omega\left\|\psi_0\right\|_{\mathcal{H}^*(\partial D)}\gtrsim|\delta|^{-1}\left|\left(f,\varphi_j\right)_{\mathcal{H}^*(\partial D)}\right|.
\end{align}
Hence, we obtain from Lemma \ref{lem4.1} and Remark \ref{rem3.5} that
\begin{align*}
\left\|\nabla u\right\|_{L^2(D)}^2=&\left\|\nabla\mathcal{S}^{\alpha,k_c}[\phi]\right\|_{\mathcal{H}(\partial D)}^2\gtrsim
\left\|\tilde{\phi}\right\|_{\mathcal{H}^*(\partial D)}^2-\left|\omega\ln\omega\right|^2\left|(\phi,\varphi_0)_{\mathcal{H}^*(\partial D)}\right|^2\\
\gtrsim&\left\|\tilde{\psi}\right\|_{\mathcal{H}^*(\partial D)}^2-\left|\omega\ln\omega\right|^2\left|(\psi,\varphi_0)_{\mathcal{H}^*(\partial D)}\right|^2\\
\gtrsim&|\delta|^{-2}\left|(f,\varphi_j)_{\mathcal{H}^*(\partial D)}\right|^2-\left|\omega\ln\omega\right|^2\left|(\psi,\varphi_0)_{\mathcal{H}^*(\partial D)}\right|^2.
\end{align*}
Next, we will show that $\left|(\psi,\varphi_0)_{\mathcal{H}^*(\partial D)}\right|$ is bounded, and $(f,\varphi_j)_{\mathcal{H}^*(\partial D)}\neq0$.

Let $\left(\mathcal{I}+\left(\omega^{2}\ln\omega\mathcal
{A}_1+\mathcal {O}(\omega^2)\right)\mathcal{A}_0^{-1}\right)^{-1}[f]=g$. Since $|\delta|^{-1}\omega\leq c_1$, by (\ref{4.10}), it implies
$\left\|g\right\|_{\mathcal{H}^*(\partial D)}$ is bounded. Then, noting $\lambda_0=\frac{1}{2}$, we see that
\begin{align*}
\left|(\psi,\varphi_0)_{\mathcal{H}^*(\partial D)}\right|=\left|\tau_0^{-1}(g,\varphi_0)_{\mathcal{H}^*(\partial D)}\right|=\left|\mu_m(g,\varphi_0)_{\mathcal{H}^*(\partial D)}\right|\lesssim C.
\end{align*}
Here, $C$ is a positive constant.

Applying Green's formula and jump relation, it follows that
\begin{align*}
&(f,\varphi_j)_{\mathcal{H}^*(\partial D)}=-\left(\frac{\partial F_z}{\partial\nu},\varphi_j\right)_{\mathcal{H}^*(\partial D)}
=\left(\frac{\partial F_z}{\partial\nu},\mathcal {\tilde{S}}^{\alpha,0}[\varphi_j]\right)\\
=&\int_{D}\mathcal {S}^{\alpha,0}[\varphi_j]\Delta F_zdx-\int_{D}F_z\Delta\mathcal {S}^{\alpha,0}[\varphi_j]dx
+\int_{\partial D}F_z\frac{\partial\mathcal{S}^{\alpha,0}}{\partial\nu}[\varphi_j]\Big|_-d\sigma\\
=&-k_m^2\int_{D}F_z\mathcal {S}^{\alpha,0}[\varphi_j]dx+\left(\lambda_j-\frac{1}{2}\right)\int_{\partial D}F_z\varphi_jd\sigma.
\end{align*}
Furthermore, from (\ref{3.8}) and (\ref{2.9}), it yields
\begin{align*}
&\int_{\partial D}F_z\varphi_jd\sigma=\int_{\partial D}a\cdot\nabla_x G^{\alpha,k_m}(x,z)\varphi_jd\sigma
=-\int_{\partial D}a\cdot\nabla_z G^{\alpha,k_m}(x,z)\varphi_jd\sigma\\
=&-a\cdot\nabla_z\mathcal {S}^{\alpha,k_m}[\varphi_j](z)=-a\cdot\nabla_z\mathcal {S}^{\alpha,0}[\varphi_j](z)+\mathcal{O}\left(\omega^{2}\ln\omega\right).
\end{align*}
Thereby, $(f,\varphi_j)_{\mathcal{H}^*(\partial D)}=-a\left(\lambda_j-\frac{1}{2}\right)\cdot\nabla_z\mathcal {S}^{\alpha,0}[\varphi_j](z)+\mathcal{O}\left(\omega^{2}\ln\omega\right)$. Clearly, for $\omega$ small enough,
$(f,\varphi_j)_{\mathcal{H}^*(\partial D)}\neq0$. We find $\left\|\nabla u\right\|_{L^2(D)}\gtrsim|\delta|^{-1}$, and likewise $|\delta|^{-1}\gtrsim\left\|\nabla u\right\|_{L^2(D)}$ can also be proved. Hence, (\ref{4.9}) hold.
\end{proof}

\begin{remark}
From the proof of Theorem \ref{thm4.3} (2), it shows that If $\lambda(\sigma^{-1}\mu_m^{-1})=\lambda_j$ for some $j\geq1$, in the quasi-static regime and for fixed parameter $\delta$,  the near energy
\begin{align}\label{4.9}
\left\|\nabla u\right\|_{L^2(D)}\gtrsim \left|\frac{a}{\delta}\left(\lambda_j-\frac{1}{2}\right)\cdot\nabla_z\mathcal {S}^{\alpha,0}[\varphi_j](z)\right|.
\end{align}
Then, because of the singularity of $\nabla_z\mathcal {S}^{\alpha,0}[\varphi_j](z)$, we find that when the the location of the dipole source $z\rightarrow \partial D$ or the intensity of the dipole source $|a|$ is large enough, the near energy will become large, which indicates the enhancement of near field.
\end{remark}

Now we write $\mu_c(\omega)=\mu^{(1)}(\omega)+i\mu^{(2)}(\omega)$,
and $\mu^{(1)},\ \mu^{(2)}$ satisfy the following Kramer-Kronig
relations (Hilbert transform) \cite{AMRZ14,ADM6}:
\begin{align}\label{kk}
\mu^{(1)}(\omega)=&\frac{1}{\pi}P.V.\int_{-\infty}^{+\infty}\frac{\mu^{(2)}(s)}{\omega-s}ds;\\
\mu^{(2)}(\omega)=&-\frac{1}{\pi}P.V.\int_{-\infty}^{+\infty}\frac{\mu^{(1)}(s)}{\omega-s}ds.
\end{align}

The magnetic permeability of particle $\mu_c(\omega)$ can be
described by the Drude model \cite{SC1,AMRZ14,ADM6,FZS27}, i.e.
\begin{align}\label{drm}
\mu_c(\omega)=\mu_0\left(1-F\frac{\omega^2}{\omega^2-\omega_0^2+i\omega\tau^{-1}}\right),
\end{align}
where $\tau>0$ is the nanoparticle's bulk electron relaxation rate
($\tau^{-1}$ is the damping coefficient), $F$ is a filling factor, $\mu_0$ is permeability of free space,
and $\omega_0$ is a localized plasmon resonant frequency. In
particular, as $\omega\in\mathbb{R}$, if
\begin{align}\label{drm1}
(1-F)(\omega^2-\omega_0^2)^2-F\omega_0^2(\omega^2-\omega_0^2)+\tau^{-2}\omega^2<0,
\end{align}
we see the real part of $\mu_c(\omega)$ is negative, i.e., $\mu^{(1)}(\omega)<0$, which indicates
some singular property of particles.

\begin{theorem}\label{thm4.4}
Let $u$ be the solution of (\ref{2.3}). The magnetic permeability of particle $\mu_c(\omega)=\mu^{(1)}(\omega)+i\mu^{(2)}(\omega)$ satisfies the Kramer-Kronig relations (\ref{kk}) and
the Drude model (\ref{drm}). We find that, in the quasi-static regime, if $\lambda\left(\frac{|\mu_c(\omega)|^2}{\mu_m\mu^{(1)}(\omega)}\right)=\lambda_j$ for some $j\geq1$, and $a\cdot\nabla_z\mathcal {S}^{\alpha,0}[\varphi_j](z)\neq0$ for some $z\in Y\setminus\overline{D}$,
then, as the nanoparticle's bulk electron relaxation rate $\tau\rightarrow0^+$ or the filling factor $F\rightarrow0^+$, the near field energy
\begin{align}
\left\|\nabla u\right\|_{L^2(D)}\rightarrow\infty.
\end{align}
\end{theorem}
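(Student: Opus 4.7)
The plan is to reduce the claim to Theorem~\ref{thm4.3}(2) by rewriting the hypothesis in the form needed there, and then to read off the asymptotic behavior of $\delta=\Im(\mu_c^{-1}(\omega))$ directly from the Drude model (\ref{drm}).

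First I would observe that $\mu_c^{-1}=(\mu^{(1)}-i\mu^{(2)})/|\mu_c|^2$, so $\sigma=\Re(\mu_c^{-1})=\mu^{(1)}/|\mu_c|^2$ and $\delta=\Im(\mu_c^{-1})=-\mu^{(2)}/|\mu_c|^2$. In particular
\[
\sigma^{-1}\mu_m^{-1}=\frac{|\mu_c(\omega)|^2}{\mu_m\mu^{(1)}(\omega)},
\]
so the hypothesis $\lambda(|\mu_c|^2/(\mu_m\mu^{(1)}))=\lambda_j$ is exactly the resonance configuration of Theorem~\ref{thm4.3}(2), and, together with the source non-degeneracy $a\cdot\nabla_z\mathcal{S}^{\alpha,0}[\varphi_j](z)\neq 0$, that theorem yields $\|\nabla u\|_{L^2(D)}\approx|\delta|^{-1}$ as long as the admissibility constraint $|\delta|^{-1}\omega\leq c_1$ is respected. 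It therefore suffices to prove $|\delta|\to 0$ in each of the two regimes $\tau\to 0^+$ and $F\to 0^+$.

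Second, I would plug (\ref{drm}) into the definitions of $\mu^{(1)}$ and $\mu^{(2)}$. Separating the real and imaginary parts of $F\omega^2/(\omega^2-\omega_0^2+i\omega\tau^{-1})$ yields
\[
\mu^{(1)}(\omega)=\mu_0\left(1-\frac{F\omega^2(\omega^2-\omega_0^2)}{(\omega^2-\omega_0^2)^2+\omega^2\tau^{-2}}\right),\qquad \mu^{(2)}(\omega)=\frac{\mu_0 F\omega^3\tau^{-1}}{(\omega^2-\omega_0^2)^2+\omega^2\tau^{-2}}.
\]
In either limit $F\to 0^+$ or $\tau\to 0^+$ the Drude correction is negligible, so $\mu_c\to\mu_0$ and $|\mu_c|^2$ stays bounded away from zero. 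By contrast $\mu^{(2)}$ is linear in $F$, so $\mu^{(2)}\to 0$ as $F\to 0^+$; and for $\tau\to 0^+$ the denominator is dominated by $\omega^2\tau^{-2}$, giving the asymptotics $\mu^{(2)}\sim\mu_0 F\omega\tau\to 0$. Consequently $|\delta|=\mu^{(2)}/|\mu_c|^2\to 0$ in both regimes, and Theorem~\ref{thm4.3}(2) then forces $\|\nabla u\|_{L^2(D)}\to\infty$.

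The delicate step is handling the constraint $|\delta|^{-1}\omega\leq c_1$. From the asymptotics above, as $\tau\to 0^+$ one has $|\delta|\sim F\omega\tau/\mu_0$, and hence $|\delta|^{-1}\omega\sim\mu_0/(F\tau)$ cannot stay bounded if $\omega$ is held fixed. The way out is to use the quasi-static regime itself: $\omega$ is a free small parameter, so for each $(F,\tau)$ along the limit process one may pick $\omega=\omega(F,\tau)$ small enough that $|\delta|^{-1}\omega\leq c_1$ while the resonance equation $\lambda(|\mu_c(\omega)|^2/(\mu_m\mu^{(1)}(\omega)))=\lambda_j$ still selects an admissible $\omega$. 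Setting up this joint limit carefully so that the lower bound of Theorem~\ref{thm4.3}(2) survives along the chosen path, and $|\delta|$ nevertheless tends to zero, is the principal obstacle; once it is fixed, the blow-up $\|\nabla u\|_{L^2(D)}\to\infty$ follows at once.
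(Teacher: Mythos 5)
Your proposal is essentially the paper's own proof: the paper's argument is a single line that sets $|\delta|=\mu^{(2)}(\omega)/|\mu_c(\omega)|^2=\tau F\mu_0\omega^3/\bigl((\tau^2(\omega^2-\omega_0^2)^2+\omega^2)|\mu_c(\omega)|^2\bigr)$ and invokes Theorem~\ref{thm4.3}(2), exactly as you do. The ``delicate step'' you flag --- that $|\delta|\to 0$ at fixed $\omega$ violates the admissibility constraint $|\delta|^{-1}\omega\leq c_1$ (and, for that matter, that $\mu^{(1)}\to\mu_0>0$ threatens the resonance condition itself), so that $\omega$ must be sent to $0$ jointly with $\tau$ or $F$ --- is a genuine issue, but the paper's proof does not address it at all, so on this point you are being more careful than the source rather than missing something it supplies.
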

\begin{proof}
In Theorem \ref{thm4.3} (2), let $|\delta|=\frac{\mu^{(2)}(\omega)}{|\mu_c(\omega)|^2}=\frac{\tau F\mu_0\omega^3}{(\tau^2(\omega^2-\omega_0^2)^2+\omega^2)|\mu_c(\omega)|^2}$, we get the result.
\end{proof}

\begin{remark}
From Theorem \ref{thm4.4}, we see that, if the filling factor $F$ is fixed, then the rate of blows up of near field energy is $\mathcal{O}(\tau^{-1})$, (as $\tau\rightarrow0^+$). Likewise, fixing $\tau>0$, the rate of blows up of near field energy is $\mathcal{O}(F^{-1})$, (as $F\rightarrow0^+$).
\end{remark}

\begin{remark}
As application, we can solve a simple optical design problem: How to determine the magnetic permeability of nanoparticles
embedded into the optical device, such that the near field energy is maximized? In fact, according to Theorem \ref{thm4.4}, for fixed sufficiently small frequency $\omega$, giving the magnetic permeability of background medium $\mu_m$ and filling factor $F$, then we can obtain the nanoparticle's bulk electron relaxation rate $\tau$ by the following direct algorithm:

(i) Calculate the eigenvalue $\lambda_j$ ($j\geq1$) for $\alpha$-quasi-periodic Neumann-Poincar{\'e} operator $(\mathcal {K}^{-\alpha,0})^*$;

(ii) Using the Drude model (\ref{drm}), and find the bulk electron relaxation rate $\tau$ by solving the nonlinear equation
\begin{align*}
\lambda\left(\frac{|\mu_c(\tau)|^2}{\mu_m\mu^{(1)}(\tau)}\right)=\lambda_j,
\end{align*}
where the function $\lambda(t)$ is defined by (\ref{3.26}).

Meanwhile, by applying above algorithm, we can similarly compute the filling factor $F$. From the Drude model, it is easy to see that the magnetic permeability of nanoparticles are fully determined by the bulk electron relaxation rate and filling factor.

Moreover, the corresponding near field can be computed by (\ref{3.32}). Notice that the solution of (\ref{2.3}) and wave number $k_c$ are complex, by using Green's formula, the resonance near field energy can be written as
\begin{align*}
\left\|\nabla u\right\|_{L^2(D)}^2=\left((\Re{k_c})^2-(\Im{k_c})^2\right)&\left\|\mathcal {S}^{\alpha,k_c}[\phi]\right\|_{L^2(D)}^2\\&+\Re{\int_{\partial D}\mathcal {S}^{\alpha,k_c}[\phi]\overline{\left(-\frac{1}{2}\mathcal
{I}+(\mathcal {K}^{-\alpha,k_c})^*\right)[\phi]})d\sigma},
\end{align*}
which can be calculated by the Nystr{\"o}m method (see \cite{K31}).
\end{remark}

\section{Conclusion}
The mathematical analysis of plasmonic resonance of two dimensions photonic crystal in which embedded the double negative nanoparticles was given in this paper. Based on perturbation theorem, by applying the quasi-periodic layer potential method and the spectral theorem of quasi-periodic Neumann-Poincar{\'e} operator, we have deduced the quasi-static expansion of the near field and shown that when $\lambda\left(\frac{\mu_c}{\mu_m}\right)$ is close to some eigenvalue of quasi-periodic Neumann-Poincar{\'e} operator, the near field will enhance obviously.  Moreover, as the magnetic permeability of nanoparticles was described by the Drude model, the conditions under which the plasmonic resonance occurs, and the estimate for rate of blow up of near field energy with respect to nanoparticle's bulk electron relaxation rate and filling factor were also obtained.\\
\\
\textbf{Acknowledgments}\\
\\
We would like to thank Professor Habib Ammari and Dr. Hai Zhang for their useful discussions. The work
described in this paper was supported by the NSF of China (11301168) and the
Plan for the growth of young teachers of Hunan University (531107040658).

\section{Appendix}

The proof of Lemma \ref{lem1}:\\
\begin{proof}
It is sufficient to prove the last part of the Lemma \ref{lem1}, i.e. $\mathcal {S}^{\alpha,k}$ is an isomorphism from $H^{-\frac{1}{2}}(\partial D)$ to $H^{\frac{1}{2}}(\partial D)$. We first show the sesquilinear form $(\mathcal{S}^{\alpha,i}[\cdot],\cdot)_{\frac{1}{2},-\frac{1}{2}}$ on $H^{-\frac{1}{2}}(\partial D)$ is coercive, here, $\mathcal {S}^{\alpha,i}$ is the quasi-periodic single layer potential with the wave number $k=i$. In fact, let $v\in H^1(Y\setminus\partial D)$ be the quasi-periodic single layer potential given by
\begin{align*}
v(x)=\int_{\partial D}G^{\alpha,k}(x,y)\phi(y)d\sigma(y),\ \ \ \phi\in H^{-\frac{1}{2}}(\partial D),\ \ x\in Y\setminus\partial D.
\end{align*}
Clearly, $v$ satisfies Helmholtz equation in $D$ and $Y\setminus\overline{D}$ and the $\alpha-$ quasi-periodic boundary condition on $\partial Y$
\begin{align*}
v(1,x_2)=&e^{i\alpha_1}v(0,x_2),\ \ \ x_2\in(0,1),\\
v(x_1,1)=&e^{i\alpha_2}v(x_1,0),\ \ \ x_1\in(0,1),\\
\frac{\partial v}{\partial x_1}(1,x_2)=&e^{i\alpha_1}\frac{\partial v}{\partial x_1}(0,x_2),\ \ \ x_2\in(0,1),\\
\frac{\partial v}{\partial x_2}(x_1,1)=&e^{i\alpha_2}\frac{\partial v}{\partial x_2}(x_1,0),\ \ \ x_1\in(0,1).
\end{align*}
Setting $k=i$ in the definition of $v$, and applying Green's first identity, we find that
\begin{align*}
&\left(\mathcal{S}^{\alpha,i}[\phi],\phi\right)_{\frac{1}{2},-\frac{1}{2}}=\int_{\partial D}\left(\frac{\partial v}{\partial \nu}\bigg|_+-\frac{\partial v}{\partial \nu}\bigg|_-\right)\overline{v}d\sigma\\
=&\int_{Y\setminus\overline{D}}\left(|\nabla v|^2+|v|^2\right)\overline{v}dx-\int_{\partial Y}\overline{v}\frac{\partial v}{\partial \nu} d\sigma+\int_{D}\left(|\nabla v|^2+|v|^2\right)\overline{v}dx.
\end{align*}
Since $v$ is quasi-periodic function, we can see $\int_{\partial Y}\overline{v}\frac{\partial v}{\partial \nu} d\sigma=0$ (see \cite{AKL25} page 125), and then
\begin{align}\label{7.1}
\left(\mathcal{S}^{\alpha,i}[\phi],\phi\right)_{\frac{1}{2},-\frac{1}{2}}
=\int_{Y\setminus\overline{D}}\left(|\nabla v|^2+|v|^2\right)\overline{v}dx+\int_{D}\left(|\nabla v|^2+|v|^2\right)\overline{v}dx.
\end{align}
Furthermore, from the jump relations of $v$ and trace theorem, it follows that
\begin{align}\label{7.2}
\|\phi\|_{H^{-\frac{1}{2}}(\partial D)}=\left\|\frac{\partial v}{\partial \nu}\bigg|_+-\frac{\partial v}{\partial \nu}\bigg|_-\right\|_{H^{-\frac{1}{2}}(\partial D)}\lesssim\|v\|_{H^1(D)}+\|v\|_{H^1(Y\setminus\overline{D})}.
\end{align}
Thus, combining (\ref{7.1}) and (\ref{7.2}), it yields
\begin{align*}
\|\phi\|_{H^{-\frac{1}{2}}(\partial D)}\lesssim\left(\mathcal{S}^{\alpha,i}[\phi],\phi\right)_{\frac{1}{2},-\frac{1}{2}}.
\end{align*}

Notice that the integral kernel of $\mathcal{S}^{\alpha,k}-\mathcal{S}^{\alpha,i}$ is a $C^\infty$ function in a neighborhood of $\partial D\times \partial D$ and hence we conclude that $\mathcal{S}^{\alpha,k}-\mathcal{S}^{\alpha,i}$ is compact from $H^{-\frac{1}{2}}(\partial D)$ to $H^{\frac{1}{2}}(\partial D)$. Applying the Lax-Milgram lemma to the bounded and coercive sesquilinear form
\begin{align*}
a(\phi,\psi):=\left(\mathcal{S}^{\alpha,i}[\phi],\psi\right)_{\frac{1}{2},-\frac{1}{2}},\ \ \ \phi,\psi\in H^{-\frac{1}{2}}(\partial D).
\end{align*}
we imply that $(\mathcal{S}^{\alpha,i})^{-1}:H^{\frac{1}{2}}(\partial D)\rightarrow H^{-\frac{1}{2}}(\partial D)$ exists and is bounded.
Using the Theorem 5.14 in \cite{CC26} and compactness of $\mathcal{S}^{\alpha,k}-\mathcal{S}^{\alpha,i}$, it finds that $\mathcal{S}^{\alpha,k}$ is an isomorphism if and only if $\mathcal{S}^{\alpha,k}$ is injective. However, the injectivity of $\mathcal{S}^{\alpha,k}$ is proved in Lemma \ref{lem3.3}. Thus, the statements in
Lemma \ref{lem1} are proved.
\end{proof}


\end{document}